\newtheorem{corollary}{Corollary}
\newtheorem{theorem}{\bf Theorem}
\newtheorem{definition}{\bf Definition}
\newcommand{\Rmnum}[1]{\expandafter\@slowromancap\romannumeral #1@}
\begin{document}
\title{Energy Storage Sharing in Smart Grid: A Modified Auction Based Approach}
\author{Wayes~Tushar,~\IEEEmembership{Member,~IEEE,}~Bo~Chai,~Chau~Yuen,~\IEEEmembership{Senior Member,~IEEE,}
 Shisheng Huang,~\IEEEmembership{Member,~IEEE,}
David B. Smith,~\IEEEmembership{Member, IEEE,} H.~Vincent~Poor~\IEEEmembership{Fellow,~IEEE,} and Zaiyue Yang,~\IEEEmembership{Member, IEEE}
\thanks{W. Tushar and C. Yuen are with Singapore University of Technology and Design (SUTD), 8 Somapah Road, Singapore 487372 (Email: \{wayes\_tushar, yuenchau\}@sutd.edu.sg).}
\thanks{B. Chai is with the State Grid Smart Grid Research Institute, Beijing, 102211, China (Email: chaibozju@gmail.com).}
\thanks{S. Huang is with the Ministry of Home Affairs, Singapore (Email: shisheng@gmail.com).}
\thanks{D. B. Smith is with the National ICT Australia (NICTA), ACT 2601, Australia and adjunct with the Australian National University (Email: david.smith@nicta.com.au).}
\thanks{H. V. Poor is with the School of Engineering and Applied Science at Princeton University, Princeton, NJ, USA (Email: poor@princeton.edu).}
\thanks{Z. Yang is with the State Key Laboratory of Industrial Control Technology at Zhejiang University, Hangzhou, China (Email: yangzy@zju.edu.cn).}
\thanks{This work is supported in part by the Singapore University of Technology and Design (SUTD) through the Energy Innovation Research Program (EIRP) Singapore NRF2012EWT-EIRP002-045 and IDC grant IDG31500106, and in part by the U.S. National Science Foundation under Grant ECCS-1549881. D. B. Smith's work is supported by NICTA, which is funded by the Australian Government through the Department of Communications and the Australian Research Council.}
}
\IEEEoverridecommandlockouts
\maketitle
\begin{abstract}
This paper studies the solution of joint energy storage (ES) ownership sharing between multiple shared facility controllers (SFCs) and those dwelling in a residential community. The main objective is to enable the residential units (RUs) to decide on the fraction of their ES capacity that they want to share with the SFCs of the community in order to assist them storing electricity, e.g., for fulfilling the demand of various shared facilities. To this end, a \emph{modified} auction-based mechanism is designed that captures the interaction between the SFCs and the RUs so as to determine the auction price and the allocation of ES shared by the RUs that governs the proposed joint ES ownership. The fraction of the capacity of the storage that each RU decides to put into the market to share with the SFCs and the auction price are determined by a noncooperative Stackelberg game formulated between the RUs and the auctioneer. It is shown that the proposed auction possesses the \emph{incentive compatibility} and the \emph{individual rationality} properties, which are leveraged via the unique Stackelberg equilibrium (SE) solution of the game. Numerical experiments are provided to confirm the effectiveness of the proposed scheme.
\end{abstract}
\begin{IEEEkeywords}
Smart grid, shared energy storage, auction theory, Stackelberg equilibrium, strategy-proof, incentive compatibility.
\end{IEEEkeywords}
 \setcounter{page}{1}
\section{Introduction}\label{sec:introduction}
\IEEEPARstart{E}{nergy} storage (ES) devices are expected to play a significant role in the future smart grid due to their capabilities of giving more flexibility and balance to the grid by providing a back-up to the renewable energy~\cite{Silvestre-TII:2014,Garcia-TII:2014,Fang-J-CST:2012,Liu-STSP:2014,YiLiu-TIE:2015,Naveed-TSG:2015,Naveed-Elsevier:2016,Tushar-ITS:2015,Shisheng-Elsevier:2015}. ES can improve the electricity management in a distribution network, reduce the electricity cost through opportunistic demand response, and improve the efficient use of energy~\cite{Wang-JTSG:2013}. The distinct features of ES make it a perfect candidate to assist in residential demand response by altering the electricity demand due to the changes in the balance between supply and demand. Particularly, in a residential community setting, where each household is equipped with an ES, the use of ES devices can significantly leverage the efficient flows of energy within the community in terms of reducing cost, decarbonization of the electricity grid, and enabling effective demand response (DR).

However, energy storage requires space. In particular for large consumers like shared facility controllers (SFCs) of large apartment buildings~\cite{Tushar-TIE:2014}, the energy requirements are very high, which consequently necessitates the actual installment of very large energy storage capacity. The investment cost of such storage can be substantial whereas due to the random usage of the facilities (depending on the usage pattern of different residents) some of the storage may remain unused. Furthermore, the use of ESs for RUs is very limited for two reasons \cite{Wang-JTSG:2013}: firstly, the installation cost of ES devices is very high and the costs are entirely borne by the users. Secondly, the ESs are mainly used to save electricity costs for the RUs rather than offer any support to the local energy authorities, which further makes their use economically unattractive. Hence, there is a need for solutions that will capture both the problems related to space and cost constraints of storage for SFCs and the benefit to RUs for supporting third parties.

To this end, numerous recent studies have focused on energy management systems with ES devices as we will see in the next section. However, most of these studies overlook the potential benefits that local energy authorities such as SFCs can attain by jointly sharing the ES devices belonging to the RUs. Particularly due to recent cost reduction of small-scale ES devices, sharing of ES devices installed in the RUs by the SFCs has the potential to benefit both the SFCs and the RUs of the community as we will see later. In this context, we propose a scheme that enables joint ES ownership in smart grid. During the sharing, each RU leases the SFCs a fraction of its ES device to use, and charges and discharges from the rest of its ES capacity for its own purposes. On the contrary, each SFC exclusively uses its portion of ES devices leased from the RUs. This work is motivated by \cite{Wang-JTSG:2013}, in which the authors discussed the idea of joint ownership of ES devices between domestic customers and local network operators, and demonstrated the potential system-wide benefits that can be obtained through such sharing. However, no policy has been developed in \cite{Wang-JTSG:2013} to determine how the fraction of battery capacity, which is shared by the network operators and the domestic users, is decided.

Note that, as an owner of an ES device, each RU can decide whether or not to take part in the joint ownership scheme with the SFCs and what fraction of the ES can be shared with the SFCs. Hence, there is a need for solutions that can capture this decision making process of the RUs by interacting with the SFCs of the network. In this context, we propose a joint ES ownership scheme in which by participating in storage sharing with the SFCs, both the RUs and SFCs benefit economically. Due to the interactive nature of the problem, we are motivated to use auction theory to study this problem~\cite{klemperer-JES:1999}.

Exploiting the two-way communications aspects, auction mechanisms can exchange information between users and electricity providers, meet users' demands at a lower cost, and thus contribute to the economic and environmental benefits of smart grid\footnote{Please note that such a technique can be applied in the real distribution network such as in electric vehicle charging stations by using the two-way information and power flow infrastructure of smart grids~\cite{Fang-J-CST:2012}.}~\cite{Ma-JTSG:2014}. In particular, 1) we modify the \emph{Vickrey} auction technique~\cite{Vickrey-JF:1961} by integrating a Stackelberg game between the auctioneer and the RUs and show that the modified scheme leads to a desirable joint ES ownership solution for the RUs and the SFCs. To do this, we modify the auction price derived from the Vickrey auction, to benefit the owner of the ES, through the adaptation of the adopted game as well as keep the cost savings to the SFCs at the maximum; 2)  We study the attributes of the technique, and show that the proposed auction scheme possesses both the \emph{incentive compatibility} and the \emph{individual rationality} properties leveraged by the unique equilibrium solution of the game; 3) We propose an algorithm for the Stackelberg game that can be executed distributedly by the RUs and the auctioneer, and the algorithm is shown to be guaranteed to reach the desired solution. We also discuss how the proposed scheme can be extended to the time varying case; and 4) Finally, we provide numerical examples to show the effectiveness of the proposed scheme.

The importance and necessity of the proposed study with respect to actual operation of smart grid lies in assisting the SFCs of large apartment buildings in smart communities to reduce space requirements and investment costs of large energy storage units. Furthermore, by participating in storage sharing with the SFCs, the RUs can benefit economically, which can consequently influence them to efficiently schedule their appliances and thus reduce the excess use of electricity. We stress that multi-agent energy management schemes are not new in the smart grid paradigm and have been discussed in \cite{Tushar-TIE:2014,chaibo-TSG:2014} and \cite{Maharjan-JTSG:2013}. However, the scheme discussed in the paper differs from these existing approaches in terms of the considered system model, chosen methodology and analysis, and the use of the set of rules to reach the desired solution.

The remainder of the paper is organized as follows. We provide a comprehensive literature review of the related work in Section~\ref{sec:literature} followed by the considered system model in Section~\ref{sec:system-model}. Our proposed modified auction-based mechanism is demonstrated in Section~\ref{sec:auction-ownership} where we also discuss how the scheme can be adopted in a time varying environment. The numerical case studies are discussed in Section~\ref{sec:case-study}, and finally we draw some concluding remarks in Section~\ref{sec:conclusion}.

\section{State-of-The Art}\label{sec:literature}
In the recent years, there has been an extensive research effort to understand the potential of ES devices for residential energy management \cite{Siano:2014}. This is mainly due to their capabilities in reducing the intermittency of renewable energy generation~\cite{Denholm:2010} as well as lowering the cost of electricity~\cite{Cao:2004}. The related studies can be divided into two general categories. The first category of studies consisting of \cite{Sechilariu-JTIE:2013,Carpinelli-TSG:2013}, which assume that the ESs are installed within each RU premises and are used solely by the owners in order to perform different energy management tasks such as optimal placement, sizing and control of charging and discharging of storage devices.

The second type of studies deal with ES devices that are not installed within the RUs but located in a different location such as in electric vehicles (EVs). Here, the ESs of EVs are used to provide ancillary services for RUs~\cite{GookKim:2013,VanRoy:2014,RongYu:2014} and local energy providers~\cite{JunhaoLin:2014,JunTan:2014,Igualada:2014}. Furthermore, another important impact of ES devices on residential distribution grids is studied in \cite{Geth-PESGM:2010} and \cite{Nukamp-JTSG:2013}. In particular, these studies focus on how the use of ES devices can bring benefits for the stakeholders in external energy markets. In \cite{Geth-PESGM:2010}, the authors propose a multi-objective optimization method for siting and sizing of ESs of a distribution grid to capture the trade-offs between the storage stakeholders and the distribution system operators. Furthermore, in \cite{Nukamp-JTSG:2013}, optimal storage profiles for different stakeholders such as distribution grid operators and energy traders are derived based on case studies with real data. Studies of other aspects of smart grid can be found in \cite{Tushar-TSG:Dec2015,WT-Li-Access:Nov2015,AqsaNaeem-Access:Nov2015,YLiu-TSG:June2015,Xiumin:Elsevier-July2015,Zhang-ACMN:2015,HengZhang-TCST:2015}.

As can be seen from the above discussion, the use of ES devices in smart grid is not only limited to address the intermittency of renewable generation ~\cite{Denholm:2010} and assisting users to take part in energy management to reduce their cost of electricity~\cite{Cao:2004,Carpinelli-TSG:2013} but also extends to assisting the grid (or, other similar energy entities such as an SFC)~\cite{ZhenpoWang:2013} and generating revenues for stakeholders~\cite{Geth-PESGM:2010,Nukamp-JTSG:2013}. However, one similarity between most of the above mentioned literature is that only one entity owns the ES and uses it according to its requirements. Nonetheless, this might not always be the case if there are large number of RUs\footnote{Each RU may participate as a single entity or as a group where RUs connected via an aggregator~\cite{Wayes-J-TSG:2012}.} in a community. In this regard, considering the potential benefits of ES sharing, as discussed in \cite{Wang-JTSG:2013}, this paper investigates the case in which the SFCs in a smart community are allowed to share some fraction of the ESs owned by the RUs through a third party such as an auctioneer or a community representative.

The proposed modified auction scheme differs from the existing techniques for multi-agent energy management such as those in \cite{Tushar-TIE:2014,chaibo-TSG:2014,Maharjan-JTSG:2013} in a number of ways. Particularly, in contrast to these studies, the proposed auction scheme captures the interaction between the SFCs and the RUs, whereby the decision on the auction price is determined via a Stackelberg game. By exploiting auction rules including the determination rule, payment rule, and allocation rule the interaction between the SFCs and RUs is greatly simplified. For instance, the determination rule can easily identify the number of RUs that are participating in the auction process, which further leverage the determination of the auction price via the Stackelberg game in the payment rule. Furthermore, on the one hand the work here complements the existing works focusing on the potential of ES for energy management in smart grid. On the other hand, the proposed work has the potential to open new research opportunities in terms of  control of energy dispatch from ES, the size of ES, and exploring other interactive techniques such as cooperative games and bi-level optimization for ES sharing. 
\section{System Model}\label{sec:system-model}
Let us consider a smart community that consists of a large number of RUs. Each RU can be an individual home, a single unit of a large apartment complex, or a large number of units connected via an aggregator that acts as a single entity~\cite{Gkatzikis:2013,Wayes-J-TSG:2012,Tushar-TSG:2013}. Each RU is equipped with an ES device that the RU can use to store electricity from the main grid or its renewable energy sources, if there are any, or can perform DR management according to the real-time price offered by the grid. The ES device can be a storage device installed within each RU premises or can be the ES used for the RU's electric vehicles. The entire community is considered to be divided into a number of blocks, where each block consists of a number of RUs and an SFC. Each SFC $m\in\mathcal{M}$, where $\mathcal{M}$ is the set of all SFCs and $M=|\mathcal{M}|$, is responsible for controlling the electrical equipment and machines such as lifts, parking lot lights and gates, water pumps, and lights in the corridor area of a particular block of the community, which are shared and used by the residents of that block on regular basis. Each SFC is assumed to have its own renewable energy generation and is also connected to the main electricity grid with appropriate communication protocols.
\begin{figure}[t!]
\centering
\includegraphics[width=0.9\columnwidth]{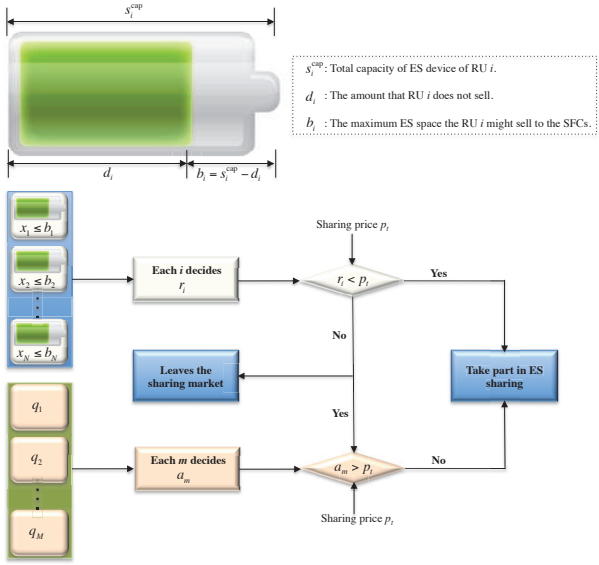}
\caption{The fraction of the ES capacity that an RU $i$ is willing to share with the SFCs of the community.} \label{fig:batteryshare}
\end{figure}

Considering the fact that the nature of energy generation and consumption is highly sporadic~\cite{Saad-CSmartgridComm:2011}, let us assume that the SFCs in the community need some extra ESs to store their electricity after meeting the demand of their respected shared facilities at a particular time of the day. This can be either due to the fact that some SFCs do not have their own ESs~\cite{Tushar-TIE:2014} or that the ESs of the SFCs are not large enough to store all the excess energy at that time. It is important to note that the ES requirement of the SFCs can stem from any type of intermittent generation profile that the SFCs or RUs can adopt. For example, one can consider that the proposed scheme is based on a hybrid generation profile comprising both solar and wind generation. However, the proposed technique is equally suitable for other types of intermittent generation as well. We assume that there are $N = |\mathcal{N}|$ RUs, where $\mathcal{N}$ is the set of all RUs in the system, that are willing to share some parts of their ES with the SFCs of the network. The battery capacity of each RU $i\in\mathcal{N}$ is $s_i^\text{cap}$, and each RU $i$ wants to put $x_i$ fraction of its ES in the market to share with the SFCs, where
\begin{eqnarray}
x_i\leq b_i = \left(s_i^\text{cap}-d_i\right).\label{eqn:1}
\end{eqnarray}
Here, $b_i$ is the maximum amount of battery space that the RU can share with the SFCs if the cost-benefit tradeoff for the sharing is attractive for it. $d_i$ is the amount of ES that the RU does not want to share, rather uses for its own needs, e.g., to run the essential loads in the future if there is any electricity disruption within the RU or if the price of electricity is very high.

To this end, to offer an ES space $x_i$, on the one hand, each RU $i$ decides on an reservation price $r_i$ per unit of energy. Hereinafter, we will use ES space and energy interchangeably to refer to the ES space that each RU might share with the SFCs. However, if the price $p_t$, which each RU received for sharing its ES, is lower than $r_i$, the RU $i$ removes its ES space $x_i$ from the market as the expected benefit from the joint sharing of ES is not economically attractive for it. On the other hand, each SFC $m\in\mathcal{M}$, that needs to share ES space with the RUs to store their energy, decides a reservation bid $a_m$, which represents the maximum unit price the SFC $m$ is willing to pay for sharing per unit of ES with the RUs in the smart community, to enter into the sharing market. And, if $a_m>p_t$, the SFC removes its commitment of joint ES ownership with RUs from the market due to the same reason as mentioned for the RU. A graphical representation of the concept of ES sharing and their decision making process of sharing the ES space of each RU $i$ with the SFCs are shown in Fig.~\ref{fig:batteryshare}. Please note that to keep the formulation simple, we do not include any specific storage model in the scheme. However, by suitably modeling some related parameters such as the storage capacity $s_i^\text{cap}$, and parameters like $d_i$ and $b_i$, the proposed scheme can be adopted for specific ES devices.
\begin{figure}[t]
\centering
\includegraphics[width=0.9\columnwidth]{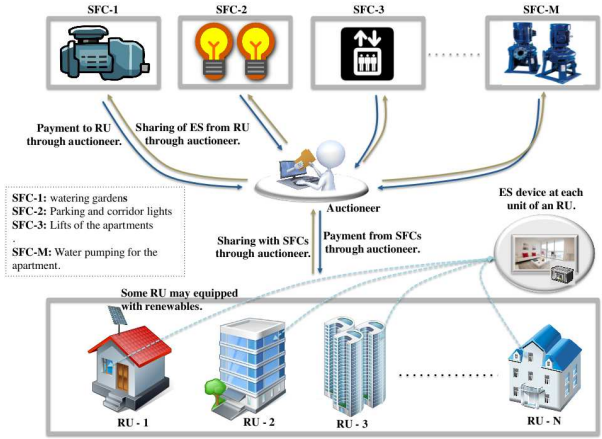}
\caption{Energy management in a smart community through auction process consisting of multiple RUs with ES devices, an auctioneer and a number of SFCs.} \label{fig:system_model}
\end{figure}

The interaction that arises from the choice of ES sharing price between the SFCs and RUs as well as the need of the SFCs to share the ES space to store their energy and the profits that the RUs can reap from allowing their ESs to be shared give rise to a market of ES sharing between the RUs and the SFCs in the smart grid. In this market, the involved $N$ RUs and $M$ SFCs will interact with each other to decide as to how many of them will take part in sharing the ESs between themselves, and also to agree on the ES sharing parameters such as the trading price $p_t$ and the amount of ES space to be shared. In the considered model, the RUs not only decide on the reservation prices $r_i$, but also on the amount of ES space $x_i$ that they are willing to share with the SFCs. The amount of $x_i$ is determined by the trade-off between between the economic benefits that the RU $i$ expects to obtain from giving the SFCs the joint ownership of its ES device and the associated reluctance $\alpha_i$ of the RU for such sharing. The reluctance to share ESs may arise from the RUs due to many factors. For instance, sharing would enable frequent charging and discharging of ESs that reduce the lifespan\footnote{Please note that the life time degradation due to charging and discharging may not true for all electromechanical systems such as redox-flow system.} of an ES device~\cite{Bradley-RSER:2009}. Hence, an RU $i$ may set its $\alpha_i$ higher so as to increase its reluctance to participate in the ES sharing. However, if the RU is more interested in earning revenue rather than increasing ES life time, it can reduce its $\alpha_i$ and thus get more net benefits from sharing its storage. Therefore, for a given set of bids $a_m,\forall m$ and storage requirement $q_m,\forall m$ by the SFCs, the maximum amount of ES $x_i$ that each RU $i$ will decide to put for sharing is strongly affected by the trading price $p_t$ and the reluctance parameter\footnote{Reluctance parameter refers to the opposite of preference parameter~\cite{Wayes-J-TSG:2012}.} $\alpha_i$ of each RU $i\in\mathcal{N}$ during the sharing process. In this context, we develop an auction based joint ES ownership scheme in the next section. We understand that the proposed scheme involves different types of users such as auctioneers, SFCs, and RUs. Therefore, the communication protocol used by them could be asynchronous. However, in our study we assume that the communication between different entities of the system are synchronous. This is mainly due to the fact that we assume our algorithm is executed once in a considered time slot, and the duration of this time slot can be one hour \cite{Derin:2010}. Therefore, synchronization is not a significant issue for the considered case and the communication complexity is affordable. For example, the auctioneer can wait for five minutes until it receives all the data from SFCs and the RUs and then the algorithm, which is proposed in Section~\ref{sec:algorithm}, can be executed.

\section{Auction Based ES Ownership}\label{sec:auction-ownership}
Vickrey auction is a type of sealed-bid auction scheme, where the bidders submit their written bids to the auctioneer without knowing the bids of others participating in the auction~\cite{Vickrey-JF:1961}. The highest bidder wins the auction but pays the \emph{second highest} bid price. Nevertheless, in this paper, we modify the classical Vickrey auction~\cite{Vickrey-JF:1961} to model the joint ES ownership scheme for a smart community consisting of multiple customers (i.e., the SFCs) and multiple owners of ES devices (i.e., the RUs). The modification is motivated by the following factors: 1) unlike the classical Vickrey auction, the modified scheme would enable the multiple owners and customers to decide simultaneously and independently whether to take part in the joint ES sharing through the determination rule of the proposed auction process, as we will see shortly; 2) the modification of the auction provides each participating RU $i$ with flexibility of choosing the amount of ES space that they may want to share with the SFCs in cases when the auction price\footnote{Hereinafter, $p_t$ will be used to refer to auction price instead of sharing or trading price.} $p_t$ is lower than their expected reservation price $r_i$; and 3) finally, the proposed auction scheme provides solutions that satisfy both the \emph{incentive compatibility} and \emph{individual rationality} properties, as we will see later, which are desirable in any mechanism that adopts auction theory~\cite{Saad-CSmartgridComm:2011}.

To this end, the proposed auction process, as shown in Fig.~\ref{fig:system_model}, consists of three elements:
\begin{enumerate}
\item Owner: The RUs in set $\mathcal{N}$, that own the ES devices, and expect to earn some economic benefits, e.g., through maximizing a utility function, by letting the SFCs to share some fraction of their ES spaces.
\item Customer: The SFCs in set $\mathcal{M}$, that are in need of ESs in order to store some excess electricity at a particular time of the day. The SFCs offer the RUs a price with a view to jointly own some fraction of their ES devices.
\item Auctioneer: A third party (e.g., estate or building manager), that controls the auction process between the owners and the customers according to some predefined rules.
\end{enumerate}
The proposed auction policies consist of A) determination rule, B) payment rule and C) storage allocation rule. Here, \emph{determination rule} allows the auctioneer to determine the maximum limit for the auction price $p_t^\text{max}$ and the number of SFCs and RUs that will actively take part in the ES sharing scheme once the auction process is initiated. The \emph{payment rule} enables the auctioneer to decide on the price that the customer needs to pay to the owners for sharing their ES devices, which allows the RUs to decide how much storage space they will be putting into the market to share with the SFCs. Finally, the auctioneer allocates the ES spaces for sharing for each SFC following the \emph{allocation rule} of the proposed auction. It is important to note that although both the customers and owners do not have any access to others private information such as the amount of ES to be shared by an RU or the required energy space by any SFC, the rules of auction are known to all the participants of the joint ownership process.

The proposed scheme initially determines the set of SFCs $\subset\mathcal{M}$ and RUs $\subset\mathcal{N}$ that will effectively take part in the auction mechanism once the upper bound of the auction price $p_t^\text{max}$ is determined. Eventually, the payment and the allocation rules are executed in the course of the auction plan.
\subsection{Determination Rule} The determination rule of the proposed scheme is executed by the following steps (inspired from~\cite{Huang-doubleauction:2002}):
\begin{enumerate}[i)]
\item The RUs of set $\mathcal{N}$, i.e., the owners of the ESs, declare their reservation price $r_i, \forall i$ in an increasing order, which we can consider, without loss of generality, as:
\begin{eqnarray}
r_1<r_2<\hdots<r_N.\label{eqn:ordered-rp}
\end{eqnarray}
The RUs submit the reservation price along with the amount $x_i$ of ES that they are interested to share with the SFCs to the auctioneer.
\item The SFCs' bidding prices, i.e., $a_m,\forall m$, are arranged in a decreasing order, i.e.,
\begin{eqnarray}
a_1>a_2>\hdots>a_M.\label{eqn:ordered-rb}
\end{eqnarray}
The SFCs submit to the auctioneer along with the quantity $q_m~\forall m$ of ES that they require.
\item Once the auctioneer receives the ordered information from the RUs and the SFCs, it generates the aggregated supply (reservation price of the RUs versus the amount of ES the RUs interested to share) and demand curves (reservation bids $a_m$ verses the quantity of ES $q_m$ needed) using \eqref{eqn:ordered-rp} and \eqref{eqn:ordered-rb} respectively.
\item The auctioneer determines the number of of participating SFCs $K$ and RUs $J$ that satisfies $a_K\geq r_J$ from the intersection of the two curves using any standard numerical method \cite{Huang-doubleauction:2002}.
\end{enumerate}
\begin{figure}[t!]
\centering
\includegraphics[width=0.9\columnwidth]{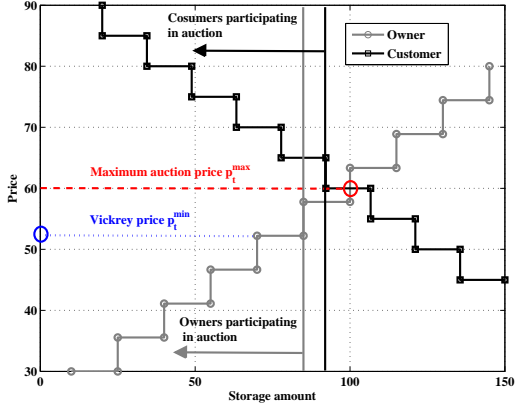}
\caption{Determination of the Vickrey price, the maximum auction price, and the number of participating RUs and SFCs in the auction process.} \label{fig:doubleauction}
\end{figure}

As soon as the SFC $K\leq M$ and RU $J\leq N$ are determined from the intersection point, as shown in Fig.~\ref{fig:doubleauction}, an important aspect of the auction mechanism is to determine the number of SFCs and RUs, which will take part in the joint ownership of ESs. We note that once the number of SFCs $K$ and RUs $J$ are determined, the following relationship holds for the rest of the SFCs and RUs in the network:
\begin{eqnarray}
 a_m<r_i;~\forall m\in\mathcal{M}/\{1,2,\hdots,K\},~\forall i\in\mathcal{N}/\{1,2,\hdots,J\}.
\end{eqnarray}
Hence, the joint ownership of ES would be a detrimental choice for the RUs and the SFCs within the set $\mathcal{N}/\{1,2,\hdots,J\}$ and $\mathcal{M}/\{1,2,\hdots,K\}$ respectively, which consequently remove them from the proposed auction process. Now, one desirable property of any auction mechanism is that no participating agents in the auction mechanism will cheat once the payment and allocation rules are being established. To this end, we propose that, once $J$ and $K$ are determined, $K-1$  SFCs and $J-1$ RUs will be engaged in the joint ES sharing process, which is a necessary condition for matching total demand and supply while maintaining a truthful auction scheme~\cite{Huang-doubleauction:2002}. Nevertheless, if truthful auction is not a necessity, SFC $K$ and RU $J$ can also be allowed to participate in the joint ES ownership auction.

\subsection{Payment Rule}\label{sec:payment} We note that the intersection of the demand and supply curves demonstrates the highest reservation price $p_t^\text{max}$ for the participating $J-1$ RUs. According to the \emph{Vickrey} auction mechanism~\cite{Vickrey-JF:1961}, the auction price for sharing the ES devices would be the second highest reservation price, i.e., the Vickrey price, which will be indicated as $p_t^\text{min}$ hereinafter. However, we note that this second highest price might not be considerably beneficial for all the participating RUs in the auction scheme. In contrast, if $p_t$ is set to $p_t = p_t^\text{max}$, the price could be detrimental for some of the SFCs. Therefore, to make the auction scheme attractive and beneficial to all the participating RUs and, at the same time, to be cost effective for all the SFCs, we strike a balance between the $p_t^\text{max}$ and $p_t^\text{min}$. To do so, we propose a scheme for deciding on both the auction price $p_t$ and the amount of ESs $x_i$ that RUs will put into the market for sharing according to $p_t$. In particular, we propose a Stackelberg game between the auctioneer that decides on the auction price $p_t$ to maximize the average cost savings to the SFCs as well as satisfying their desirable needs of ESs, and the RUs, that decide on the vector of the amount of ES $\mathbf{x} = [x_1, x_2,\hdots, x_{J-1}]$ that they would like to put into the market for sharing such that their benefits are maximized. Please note that the solution of the proposed problem formulation can also be solved following other distributed algorithms, e.g., algorithms designed via the bi-level optimization technique~\cite{Oduguwa:2002}.

\emph{Stackelberg game}: Stackelberg game is a multi-level decision making process, in which the leader of the game takes the first step to choose its strategy. The followers, on the other hand, choose their strategy in response to the decision made by the leader. In the proposed game, we assume the auctioneer as the leader and the RUs as the followers. Hence, it can be seen as a single-leader-multiple-follower Stackelberg game (SLMFSG). We propose that the auctioneer, as a leader of the SLMFSG $\Gamma$, will take the first step to choose a suitable auction price $p_t$ from the range $[p_t^\text{min},p_t^\text{min}]$. Meanwhile, each RU $i\in\{1, 2, \hdots,J-1\}$, as a follower of the game, will play its best strategy by choosing a suitable $x_i\in\left[0,b_i\right]$ in response to the price $p_t$ offered by the auctioneer. The best response strategy of each RU $i$ will stem from a utility function $U_i$, which captures the benefit that an RU $i$ can gain from deciding on the amount of ES $x_i$ to be shared for the offered price. Whereas the auctioneer chooses the price $p_t$ with a view to maximize the average cost savings $Z$ of the SFCs in the network. Now to capture the interaction between the auctioneer and the RUs, we formally define the SLMFSG $\Gamma$ as
\begin{eqnarray}
\Gamma = \lbrace\{\{1,2,\hdots,J-1\},\{\text{Auctioneer}\}\}, \{U_i\}_{i\in\{1,2,\hdots,J-1\}},\nonumber\\ \{\mathbf{X}_i\}_{i\in\{1,2,\hdots,J-1\}},Z, p_t\rbrace,
\label{eqn:Gamma}
\end{eqnarray}
which consists of: i) the set of RUs $\{1,2,\hdots,J-1\}$ participating in the auction scheme and the auctioneer; ii) the utility $U_i$ that each RU $i$ reaps from choosing a suitable strategy $x_i$ in response to the price $p_t$ announce by the auctioneer; iii) the strategy set $\mathbf{X}_i$ of each RU $i\in\{1,2,\hdots,J-1\}$; iv) the \emph{average} cost savings $Z$ that incurred to each SFC $m\in\{1,2,\hdots,K-1\}$ from the strategy chosen by the auctioneer, and v) the strategy $p_t\in\left[p_t^\text{min}, p_t^\text{max}\right]$ of the auctioneer.

Now, the utility function $U_i$, which defines the benefits that an RU $i$ can attain from sharing $x_i$ amount of its ES with the SFCs, is proposed to be
\begin{eqnarray}
U_i(x_i) = (p_t - r_i)x_i - \alpha_i x_i^2,~x_i\leq b_i, \label{eqn:utility}
\end{eqnarray}
where, $\alpha_i$ is the reluctant parameter of RU $i$, and $r_i$ is the reservation price set by RU $i$. $U_i$ mainly consists of two parts. The first part $(p_i - r_i)x_i$ is the utility in terms of its revenue that an RU $i$ obtains from sharing its $x_i$ portion of ES device. The second part $\alpha_i x_i^2$, on other hand, is the negative impact in terms of liability on the RU $i$ stemming from sharing its ES with the SFC. This is mainly due to the fact that once an RU decides to share its $x_i$ amount of storage space with an SFC, the RU can only use $s_i^\text{cap} - x_i$ amount of storage for its own use. The term $\alpha_i x_i^2$ captures this restriction of the RU on the usage of its own ES. In \eqref{eqn:utility}, the reluctance parameter $\alpha_i$ is introduced as a \emph{design parameter} to measure the degree of unwillingness of an RU to take part in energy sharing. In particular, a higher value of $\alpha_i$ refers to the case when an RU $i$ is more reluctant to take part in the ES sharing, and thus, as can be seen from \eqref{eqn:utility}, even with the same ES sharing attains a lower net benefit. Thus, $U_i$ can be seen as a net benefit to RU $i$ for sharing its ES. The utility function is based on the assumption of a non-decreasing marginal utility, which is suitable for modeling the benefits of power consumers, as explained in~\cite{Samadi-C-Smartgridcomm:2010}. In addition, the proposed utility function also possesses the following properties: i)  the utility of any RU increases as the amount of price $p_t$ paid to it for sharing per unit of ES increases; ii) as the reluctance parameter $\alpha_i$ increases, the  RU $i$ becomes more reluctant to share its ES, and consequently the utility decreases; and iii) for a particular price $p_t$, the more an RU shares with the SFCs, the less interested it becomes to share more for the joint ownership. To that end, for a particular price $p_t$ and reluctance parameter $\alpha_i$, the objective of RU $i$ is
\begin{eqnarray}
\max_{x_i}\left[ (p_t - r_i)x_i - \alpha_i x_i^2\right],~x_i\leq b_i.\label{eqn:obj-RU}
\end{eqnarray}
In the proposed approach, each RU $i$ iteratively responses to the strategy $p_t$ chosen by the auctioneer independent of other RUs in set $\{1, 2, \hdots, J-1\}/\{i\}$. The response of $i$ is affected by the offered price $p_t$, its reluctance parameter $\alpha_i$ and the initial reservation price $r_i$.

However, we note that the auctioneer does not have any control over the decision making process of the RUs. It only sets the auction price $p_t$ with a view to maximize the cost savings $Z$, with respect to the cost with the initial bidding price, for the SFCs. To this end, the target of auctioneer is assumed to maximize the average cost savings
\begin{eqnarray}
Z=\left(\frac{\sum_{m=1}^{K-1}(a_m-p_t)}{K-1}\right)\sum_{i=1}^{J-1} x_i\label{eqn:cost-savings}
\end{eqnarray}
by choosing an appropriate price $p_t$ to offer to each RU from the range $[p_t^\text{min}, p_t^\text{max}]$. Here, $\frac{\sum_{m=1}^{K-1}(a_m-p_t)}{K-1}$ is the average saving in auction price that the SFCs pay to the RUs for sharing the ESs and $\sum_i x_i$ is the total amount of ES that all the SFCs share from the RUs. From $Z$, we note that the cost savings will be more if $p_t$ is lower for all $m\in\{1,2,\hdots,K-1\}$. However, this is conflicted by that fact that a lower $p_t$ may lead to the choice of lower $x_i~\forall i\in\{1,2,\hdots,J-1\}$ by the RUs, which in turn will affect the cost to the SFCs. Hence, to reach a desirable solution set $(\mathbf{x}^*,p_t^*)$, the auctioneer and the RUs continue to interact with each other until the game reaches a Stackelberg equilibrium (SE).
\begin{definition}
Let us consider the game $\Gamma$ as described in \eqref{eqn:Gamma}, where the utility of each RU $i$ and the average utility per SFC are described via $U_i$ and $Z$ respectively. Now, $\Gamma$ will reach a SE $(\mathbf{x}^*,p_t^*)$, if and only if the solution of the game satisfies the following set of conditions:
\begin{eqnarray}
U_i(x_i^*, \mathbf{x}_{-i}^*,p_t^*)\geq U_i(x_i, \mathbf{x}_{-i}^*,p_t^*),\label{eqn:utilitymax}~\forall i\in\{1,2,\hdots,J-1\},\nonumber\\\forall x_i\in\mathbf{X}_i, p_t^*\in[p_t^\text{min}, p_t^\text{max}],
\end{eqnarray}
and
\begin{eqnarray}
\frac{\sum_{m=1}^{K-1}(a_m-p_t^*)}{K-1}\sum_i x_i^*\geq \frac{\sum_{m=1}^{K-1}(a_m-p_t)}{K-1}\sum_i x_i^*,
\label{eqn:costsaving}
\end{eqnarray}
where $\mathbf{x}_{-i} = \left[x_1, x_2,\hdots, x_{i-1}, x_{i+1}, \hdots, x_{J-1}\right]$.
\end{definition}
Hence, according to \eqref{eqn:utilitymax} and \eqref{eqn:costsaving}, both the RUs and the SFCs achieve their best possible outcomes at the SE. Hence, neither the RUs nor the auctioneer will have any incentive to change their strategies as soon as the game $\Gamma$ reaches the SE. However, achieving an equilibrium solution in pure strategies is not always guaranteed in non-cooperative games~\cite{Wayes-J-TSG:2012}. Therefore, we need to investigate whether the proposed $\Gamma$ possesses an SE or not.
\begin{theorem}
There always exists a unique SE solution for the proposed SLMFSG $\Gamma$ between the auctioneer and the participating RUs in set $\{1,2,\hdots,J-1\}$.
\label{theorem:1}
\end{theorem}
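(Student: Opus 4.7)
The plan is to establish existence and uniqueness by backward induction, which is the standard tool for Stackelberg games: first characterize each follower's best response as a function of the leader's strategy, then substitute these best responses into the leader's objective and show that the resulting one-variable problem has a unique maximizer on $[p_t^{\min}, p_t^{\max}]$.

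First I would analyze the followers' subproblem. For a fixed announced price $p_t$, each RU $i$ independently maximizes $U_i(x_i) = (p_t - r_i)x_i - \alpha_i x_i^2$ over the compact interval $[0, b_i]$. Since $\alpha_i > 0$, the objective is strictly concave in $x_i$, so the Karush--Kuhn--Tucker conditions yield a unique best response
\begin{equation}
x_i^*(p_t) = \min\!\left\{b_i,\; \max\!\left\{0,\; \tfrac{p_t - r_i}{2\alpha_i}\right\}\right\}.
\end{equation}
Because RU $i \in \{1,\dots,J-1\}$ is a participant chosen by the determination rule, we have $r_i \leq p_t^{\min} \leq p_t$, so the inner maximum is automatically non-negative and $x_i^*(p_t)$ is continuous and (weakly) increasing in $p_t$. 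This uniqueness of the followers' reaction curve is essentially immediate from strict concavity.

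Next I would tackle the leader's problem. Substituting $\mathbf{x}^*(p_t)$ into \eqref{eqn:cost-savings} gives the reduced leader objective
\begin{equation}
Z(p_t) \;=\; \frac{1}{K-1}\Bigl(\textstyle\sum_{m=1}^{K-1} a_m - (K-1)p_t\Bigr)\sum_{i=1}^{J-1} x_i^*(p_t).
\end{equation}
On any region where no capacity cap $b_i$ is binding, $\sum_i x_i^*(p_t)$ is affine and strictly increasing in $p_t$, while the average-savings factor is affine and strictly decreasing; the product is therefore a strictly concave quadratic with a negative leading coefficient in $p_t$. I would differentiate, set the derivative to zero, and read off the unique interior stationary point; restricting to $[p_t^{\min}, p_t^{\max}]$ and projecting if necessary then yields a unique $p_t^*$ in this regime.

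The main obstacle will be that the piecewise nature of $x_i^*(p_t)$ (due to the caps $x_i \leq b_i$) makes $Z(p_t)$ piecewise quadratic with breakpoints at the prices $p_t = r_i + 2\alpha_i b_i$. I would handle this by observing that on each subinterval of $[p_t^{\min}, p_t^{\max}]$ delimited by these breakpoints, $Z$ is strictly concave (as above) or affine (on the segment where every cap is saturated), and $Z$ is continuous across breakpoints. I would then argue uniqueness by showing that the left and right derivatives of $Z$ are strictly decreasing in $p_t$ across the whole interval, so $Z$ itself is strictly concave on $[p_t^{\min}, p_t^{\max}]$ and thus has a unique maximizer $p_t^*$. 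Combined with the unique follower response $\mathbf{x}^*(p_t^*)$, this yields the unique SE $(\mathbf{x}^*, p_t^*)$ asserted in Theorem~\ref{theorem:1}.
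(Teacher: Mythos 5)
Your proposal is correct and follows the same backward-induction decomposition as the paper: characterize each RU's unique best response via strict concavity of $U_i$, substitute into $Z$, and show the reduced leader problem on $[p_t^\text{min},p_t^\text{max}]$ has a unique maximizer. The difference is in execution: the paper simply takes the unconstrained stationary point $x_i^*=\frac{p_t-r_i}{2\alpha_i}$, plugs it into \eqref{eqn:cost-savings}, and reads off a closed-form interior $p_t^*$ in \eqref{eqn:SE-price}, implicitly assuming that no cap $x_i\leq b_i$ binds and that the stationary price lands inside $[p_t^\text{min},p_t^\text{max}]$. You instead keep the projections $x_i^*(p_t)=\min\{b_i,\max\{0,(p_t-r_i)/2\alpha_i\}\}$ and argue uniqueness through piecewise concavity of $Z$ across the breakpoints $r_i+2\alpha_i b_i$; this buys a proof that remains valid when some caps saturate or the optimum sits on the boundary of the price interval, at the cost of losing the explicit formula for $p_t^*$ that the paper uses later. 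One small refinement: on the final segment where every cap is saturated, $Z$ is affine with strictly negative slope $-\sum_i b_i$ rather than strictly concave, so the clean statement is that $Z$ is concave everywhere and strictly decreasing past the last breakpoint, which still forces a unique maximizer; your observation that $r_i\leq p_t^\text{min}$ for participants (so the $\max\{0,\cdot\}$ never binds) is a correct consequence of the Vickrey price being the second-highest reservation price.
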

\begin{proof}
Firstly, we note that the strategy set of the auctioneer is non-empty and continuous within the range $\left[p_t^\text{min}, p_t^\text{max}\right]$. Hence, there will always be a non-empty strategy for the auctioneer that will enable the RUs to offer some part of their ES, within their limits, to the SFCs. Secondly, for any price $p_t$, the utility function $U_i$ in \eqref{eqn:utility} is strictly concave with respect of $x_i,~\forall i\in\{1, 2, \hdots, J-1\}$, i.e., $\frac{\delta^2 U_i}{\delta x_i^2}<0$. Hence, for any price $p_t\in\left[p_t^\text{min}, p_t^\text{max}\right]$, each RU will have a unique $x_i$, which will be chosen from a bounded range $[0, b_i]$ and maximize $U_i$. Therefore, it is evident that as soon as the scheme will find a unique $p_t^*$ such that the average utility $Z$ per SFC attains a maximum value, the SLMFSG $\Gamma$ will consequently reach its unique SE.

To this end, first we note that the amount of ES $x_i^*$, at which the RU $i$ achieves its maximum utility in response to a price $p_t$ can be obtained from \eqref{eqn:utility},
\begin{equation}
x_i^* = \frac{p_t - r_i}{2\alpha_i}.\label{eqn:SE-energy}
\end{equation}
Now, replacing the value of $x_i^*$ in \eqref{eqn:cost-savings} and doing some simple arithmetics, the auction price $p_t^*$, which maximizes the average cost savings to the SFCs  can be found as
\begin{equation}
p_t^* = \frac{\left(\sum_{i=1}^{J-1}\frac{1}{2\alpha_i}\right)\left(\sum_{m=1}^{K-1}a_m\right) + \sum_{i=1}^{J-1}\frac{r_i(K-1)}{2\alpha_i}}{\sum_{i=1}^{J-1}\frac{K-1}{\alpha_i}},
\label{eqn:SE-price}
\end{equation}
where $a_m$ for any $m\in\{1, 2, \hdots, K-1\}$ and $\alpha_i$ for any $i\in\{1, 2, \hdots, J-1\}$ is exclusive. Therefore, $p_t^*$ is unique for $\Gamma$, and thus Theorem~\ref{theorem:1} is proved.
\end{proof}
\subsection {Algorithm for payment}\label{sec:algorithm}
\hspace{1mm}To attain the SE, the auctioneer, which has the information of $a_m, m = \{1, 2, 3, \hdots, K-1\}$, needs to communicate with each RU. It is considered that the auctioneer does not have any knowledge of the private information of the RUs such as $\alpha_i,~\forall i$. In this regard, in order to decide on a suitable auction price $p_t$ that will be beneficial for both the RUs and the SFCs, the auctioneer and the RUs interact with one another. To capture this interaction, we design an iterative algorithm, which can be implemented by the auctioneer and the RUs in a distributed fashion to reach the unique SE of the proposed SLMFSG. The algorithm initiates with the auctioneer who sets the auction price $p_t$ to $p_t^\text{min}$ and the optimal average cost saving per SFC $Z^*$ to $0$. Now, in each iteration, after having the information on the offered auction price by the auctioneer, each RU $i$ plays its best response $x_i\leq b_i$ and submits its choice to the auctioneer. The auctioneer, on other hand, receives the information on $\mathbf{x} = [x_1, x_2, \hdots, x_{J-1}]$ from all the participating RUs and determines the average cost savings per SFC $Z$ from its knowledge on the reservation bids $[a_1, a_2, \hdots, a_{K-1}]$ and using \eqref{eqn:cost-savings}. Then, the auctioneer compares the $Z$ with $Z^*$. If $Z>Z^*$, the auctioneer updates the optimal auction price to the one recently offered and sends a new choice of price to the RUs in the next iteration. However, if $Z\leq Z^*$, the auctioneer keeps the same price and offers another new price to the RUs in the next iteration. The iteration process continues until the conditions in \eqref{eqn:utilitymax} and \eqref{eqn:costsaving} are satisfied, and hence the SLMFSG reaches the SE. We show the step-by-step process of the proposed algorithm in Algorithm~\ref{algorithm:1}.

\begin{algorithm}[h]
\caption{Algorithm for SLMFSG to reach the SE}
\label{algorithm:1}
\begin{algorithmic}[1]
\small
\STATE Initialization: $p_t^*=p_t^\text{min}$, $Z^*=0$.
\FOR {Auction price $p_t$  from $p_t^\text{min}$ to $p_t^\text{max}$ }
   \FOR {Each RU $i \in \{1,2,\hdots,(J-1)\}$}
        \STATE RU $i$ adjusts its amount of ES $x_i$ to share according to
        \begin{equation}\label{eqn:alg-1}
           x_i^* = {\rm{arg}}{\kern 1pt} {\kern 1pt} {\kern 1pt} \mathop {\max }\limits_{0 \le {x_i} \le b_i} {\kern 1pt} {\kern 1pt} {\kern 1pt} [(p_t - r_i)x_i - \alpha_i x_i^2].
        \end{equation}
   \ENDFOR
    \STATE The auctioneer computes the average cost savings to SFCs
        \begin{equation}\label{eqn:alg-2}
    {Z} = \left(\frac{\sum_{m=1}^{K-1}(a_m-p_t)}{K-1}\right)\sum_{i=1}^{J-1} x_i^*.
        \end{equation} \\
     \IF {$Z\ge Z^*$}
         \STATE The auctioneer record the desirable price and maximum average cost savings
         \begin{equation}\label{eqn:alg-3}
           p_t^* = p_t, Z^* = Z.
         \end{equation}
     \ENDIF
\ENDFOR\\
\textbf{The SE $(\mathbf{x}^*, p_t^*)$ is achieved.}
\end{algorithmic}
\end{algorithm}
\begin{theorem}
The algorithm proposed in Algorithm~\ref{algorithm:1} is always guaranteed to reach SE of the proposed SLMFSG $\Gamma$.
\label{theorem:2}
\end{theorem}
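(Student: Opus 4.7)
The plan is to leverage the uniqueness established in Theorem~\ref{theorem:1} and then argue that the two-level structure of Algorithm~\ref{algorithm:1} exhaustively identifies that unique SE. First, I would invoke Theorem~\ref{theorem:1}: the SE $(\mathbf{x}^*, p_t^*)$ exists, is unique, and admits the closed forms \eqref{eqn:SE-energy} and \eqref{eqn:SE-price}. Hence it suffices to show that the iterates produced by Algorithm~\ref{algorithm:1} converge to this same pair, after which the SE conditions \eqref{eqn:utilitymax} and \eqref{eqn:costsaving} follow automatically.

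Next I would analyze the inner loop (lines~3--5). For any price $p_t$ announced by the auctioneer, each RU $i$ solves the bounded scalar program \eqref{eqn:alg-1}, which coincides with \eqref{eqn:obj-RU}. Strict concavity of $U_i$ in $x_i$, already used in the proof of Theorem~\ref{theorem:1}, guarantees a unique maximizer of the form $x_i^*(p_t)=\min\{(p_t-r_i)/(2\alpha_i),\, b_i\}$ and, in particular, condition \eqref{eqn:utilitymax} is met at every sweep of the outer loop with respect to the current $p_t$. This reduces the problem to a one-dimensional optimization of the auctioneer's objective over the interval $[p_t^\text{min}, p_t^\text{max}]$.

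I would then analyze the outer loop (lines~2--11). Substituting $x_i^*(p_t)$ into \eqref{eqn:alg-2} yields a scalar function $Z(p_t)$ on the compact interval $[p_t^\text{min}, p_t^\text{max}]$. Since the algorithm scans $p_t$ across this entire interval, evaluates $Z(p_t)$ at each test value, and retains the running argmax in $(p_t^*, Z^*)$, the sweep terminates at the global maximizer of $Z(p_t)$ over the scanned values. By Theorem~\ref{theorem:1}, that global maximizer is precisely the closed-form price in \eqref{eqn:SE-price}. Pairing this $p_t^*$ with the resulting $x_i^*(p_t^*)$ from the inner loop delivers the SE, so condition \eqref{eqn:costsaving} holds as well.

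The main subtlety will be the continuous-versus-discrete character of the outer \textbf{for} loop. The pseudocode sweeps $p_t$ over $[p_t^\text{min}, p_t^\text{max}]$, which in any actual implementation proceeds on a finite grid, so strictly speaking the algorithm returns an approximation to $p_t^*$. I would resolve this either by (i) invoking continuity of $Z(\cdot)$ on a compact set, so that a sufficiently fine discretization makes the returned price and allocation arbitrarily close to the unique SE, or (ii) observing that the closed form \eqref{eqn:SE-price} can be computed directly once the inner-loop responses are known, allowing the outer loop to be replaced by a single evaluation that exactly recovers $p_t^*$. Either reading suffices to conclude that Algorithm~\ref{algorithm:1} attains the SE of $\Gamma$, completing the proof.
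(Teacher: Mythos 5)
Your proposal is correct and follows essentially the same route as the paper's own proof: both rest on the uniqueness of the SE established in Theorem~\ref{theorem:1}, the compactness of the price interval $[p_t^\text{min},p_t^\text{max}]$, and the unique best response of each RU arising from the strictly concave, bounded problem \eqref{eqn:alg-1}. If anything your treatment is more careful than the paper's, which asserts convergence directly from boundedness and continuity without spelling out that the outer loop is an exhaustive sweep retaining a running argmax, and without addressing the continuous-versus-discrete character of that sweep --- the one genuine subtlety, which you explicitly identify and resolve.
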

\begin{proof}
In the proposed algorithm, we note that the choice of strategies by the RUs emanate from the choice $p_t$ of the auctioneer, which as shown in \eqref{eqn:SE-price} will always attain a non-empty single value $p_t^*$ at the SE due to its bounded strategy set $[p_t^\text{min},p_t^\text{max}]$. On the other hand, as the Algorithm~\ref{algorithm:1} is designed, in response to the $p_t^*$, each RU $i$ will choose its strategy $x_i$ from the bounded range $[0,b_i]$ in order to maximize its utility function $U_i$. To that end, due to the bounded strategy set and continuity of $U_i$ with respect to $x_i$, it is confirmed that each RU $i$ will always reach a fixed point $x_i^*$ for the given $p_t^*$. Therefore, the proposed Algorithm~\ref{algorithm:1} is always guaranteed to reach the unique SE of the SLMFSG.
\end{proof}
\subsection{Allocation Rule}\label{sec:allocation} Now, once the the amount of ES $x_i^*$ that each RU $i\in\{1,2,\hdots,J-1\}$ decides to put into the market for sharing in response to the auction price $p_t^*$ is determined, the auctioneer allocates the quantity $Q_i$ to be jointly shared by each RU $i$ and the SFCs according to following rule \cite{Huang-doubleauction:2002}:
\begin{eqnarray}
Q_i(\mathbf{x}) = \begin{cases}
x_i^* & \text{if $\sum_{i=1}^{J-1} x_i^*\leq\sum_{m=1}^{K-1} q_m$},\\
(x_i^* - \eta_i)^{+} & \text{if $\sum_{i=1}^{J-1}x_i^*>\sum_{m=1}^{K-1} q_m$},
\end{cases}
\label{eqn:allocation-rule}
\end{eqnarray}
where $(f)^{+} = \max(0,f)$ and $\eta_i$ is the allotment of the excess ES $\sum_{i=1}^{J-1}x_i^*-\sum_{m=1}^{K-1}q_m$ that an RU $i$ must endure. Essentially, the rule in \eqref{eqn:allocation-rule} emphasizes that if the requirements of the SFCs exceed the available ES space from the RUs, each RU $i$ will allow the SFCs to share all of the ES $x_i$ that it put into the market. However, if the available ES exceeds the total demand by the SFCs, then each RU $i$ will have to share a fraction of the oversupply $\sum_{i=1}^{J-1}x_i^* - \sum_{m=1}^{K-1}q_m$. Nonethless, this burden, if there is any, can be distributed in different ways among the participating RUs. For instance, the burden can be distributed either proportionally to the amount of ES $x_i^*$ that each RU $i$ shared with the SFCs or proportionally to the reservation price\footnote{Please note that the reservation price $r_i$ indicates how much each RU $i$ wants to be paid for sharing its ES with the SFCs, and thus affects the determination of total $\sum x_i^*$ and the total burden.} $r_i$ of each RU. Alternatively, the total burden can also be shared equally by the RUs in the auction scheme~\cite{Huang-doubleauction:2002}.

\subsubsection{Proportional allocation}\label{sec:Proportional allocation}In proportional allocation~\cite{Guojon:2012}, a fraction of the total burden $\eta_i$ is allocated to each RU $i$ in proportion to the reservation price $r_i$ (or, $x_i^*$) such that $\sum_i \eta_i = \sum_{i=1}^{J-1}x_i^* - \sum_{m=1}^{K-1}q_m$, which can be implemented as follows:
\begin{eqnarray}
\eta_i = \frac{\left(\sum_{i=1}^{J-1}x_i^* - \sum_{m=1}^{K-1}q_m\right)r_i}{\sum_i r_i},~i = [1, 2, \hdots, J-1].
\label{eqn:burden-proportion}
\end{eqnarray}
By replacing $r_i$ with $x_i^*$ in \eqref{eqn:burden-proportion}, the burden allocation can be determined in proportion to the shared ES by each RU.
\subsubsection{Equal allocation}\label{sec:equal-allocation}
According to equal allocation~\cite{Huang-doubleauction:2002}, each RU bears an equal burden
\begin{equation}
\eta_i = \frac{1}{J-1}\left(\sum_{i=1}^{J-1}x_i^* - \sum_{m=1}^{K-1}q_m\right), ~i = [1, 2, \hdots, J-1]
\label{eqn:burden}
\end{equation}
of the oversupply. 

Here it is important to note that, although proportional allocation allows the distribution of oversupply according to some properties of the RUs, equal allocation is more suitable to make the auction scheme strategy proof~\cite{Huang-doubleauction:2002}. Strategy proofness  is important for designing auction mechanisms as it encourages the participating players not to lie about their private information such as reservation price \cite{Saad-CSmartgridComm:2011}, which is essential for the acceptability and sustainability of such mechanisms in energy markets. Therefore, we will use equal allocation of \eqref{eqn:burden} for the rest of the paper.

\subsection{Properties of the Auction Process}We note that once the auction process is executed, there is always a possibility that the owners of the ES might cheat on the amount of storage that they wanted put into the market during auction~\cite{Ma-JTSG:2014}. In this context, we need to investigate whether the proposed scheme is beneficial enough, i.e., \emph{individually rational}, for the RUs such that they are not motivated to cheat, i.e., \emph{incentive compatible}, once the auction is executed.

Now for the \emph{individual rationality} property, first we note that all the players, i.e., the RUs and the auctioneer on behalf of the SFCs, take part in the SLMFSG to maximize their benefits in terms of their respected utility from their choice of strategies. The choice of the RUs is to determine vector of ES $\mathbf{x}^*$ such that each of the RU can be benefitted at its maximum. On the other hand,  the strategy of the auctioneer is to choose a price $p_t$ to maximize the savings of the SFCs. Accordingly, once both the RUs and the auctioneer reach such a point of the game when neither the owners nor the customers can be benefitted more from choosing another strategy, the SLMFSG reaches the SE. To this end, it is already proven in Theorem~\ref{theorem:1} that the proposed $\Gamma$ in this auction process must possesses a unique SE. Therefore, as a subsequent outcome of the Theorem~\ref{theorem:1}, it is clear that all the participants in the proposed auction scheme are individually rational, which leads to the following Corollary~\ref{corollary:1}.
\begin{corollary}
The proposed auction technique possesses the individual rationality property, in which the $J-1$ rational owners and $K-1$ rational customers actively participate in the mechanism to gain the higher utility.
\label{corollary:1}
\end{corollary}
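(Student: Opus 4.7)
The plan is to show that every active participant in the auction receives a non-negative payoff under the equilibrium strategies, so that no RU and no SFC is ever worse off than by abstaining. I would split the argument into three parts: the RUs, the SFCs, and the dependence on the SE guaranteed by Theorem~\ref{theorem:1}.

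First, for the participating RUs in $\{1,2,\hdots,J-1\}$, I would invoke the determination rule together with the bounds of the payment rule. By construction, the intersection point in Fig.~\ref{fig:doubleauction} and the subsequent ordering \eqref{eqn:ordered-rp} give $r_i \le r_{J-1} \le p_t^{\min}$ for every $i\in\{1,2,\hdots,J-1\}$. Since the auctioneer's strategy space is $[p_t^{\min}, p_t^{\max}]$, it follows that $p_t^* \ge r_i$ at the SE. Substituting the best-response $x_i^* = (p_t^* - r_i)/(2\alpha_i)$ from \eqref{eqn:SE-energy} into $U_i$ yields $U_i(x_i^*) = (p_t^* - r_i)^2/(4\alpha_i) \ge 0$. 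Moreover, since $x_i = 0$ lies in the feasible set $[0,b_i]$ and produces zero utility, any profit-maximizing RU chooses $x_i^*$ only if its payoff dominates this outside option, confirming individual rationality on the supply side.

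Second, for the participating SFCs in $\{1,2,\hdots,K-1\}$, I would use the symmetric observation based on \eqref{eqn:ordered-rb}: the decreasing ordering of bids and the determination step force $a_m \ge a_{K-1} \ge p_t^{\max} \ge p_t^*$ for every $m \in \{1,2,\hdots,K-1\}$. Hence each SFC pays a per-unit price below its declared willingness-to-pay, so its per-unit saving $(a_m - p_t^*)$ is non-negative and its contribution to $Z$ in \eqref{eqn:cost-savings} is non-negative. This means no SFC is financially worse off by taking part than by staying outside the scheme.

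Finally, I would tie the two pieces together through Theorem~\ref{theorem:1}: the existence and uniqueness of the SE $(\mathbf{x}^*, p_t^*)$ ensures that the inequalities $p_t^* \ge r_i$ and $a_m \ge p_t^*$ are actually realized at a stable operating point rather than merely being feasible; combined with the above sign arguments, this delivers $U_i(x_i^*) \ge 0$ for all $i\in\{1,\hdots,J-1\}$ and $(a_m - p_t^*) \ge 0$ for all $m\in\{1,\hdots,K-1\}$, which is precisely the individual rationality claim. The only subtle step, and the one I would treat most carefully, is justifying that the determination-rule inequalities $r_{J-1} \le p_t^{\min}$ and $a_{K-1} \ge p_t^{\max}$ persist at the SE; this hinges on the fact that $p_t^*$ is constrained to $[p_t^{\min}, p_t^{\max}]$ by the auctioneer's strategy set in \eqref{eqn:Gamma}, so the individual-rationality inequalities are built into the game's primitives and need only be unwrapped, not re-derived.
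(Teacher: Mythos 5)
Your proof is correct, but it takes a genuinely different and more substantive route than the paper's. The paper's own justification of Corollary~\ref{corollary:1} is only the paragraph preceding it: the authors note that the RUs and the auctioneer each choose strategies to maximize $U_i$ and $Z$, invoke the unique SE of Theorem~\ref{theorem:1}, and declare individual rationality to be a ``subsequent outcome'' of that theorem, with no payoff computation at all. You instead prove individual rationality in the standard mechanism-design sense---non-negative surplus relative to the outside option of abstaining---by unwrapping the determination and payment rules: the ordering \eqref{eqn:ordered-rp} together with the identification of $p_t^{\min}$ as the Vickrey (second-highest reservation) price gives $r_i \le p_t^{\min} \le p_t^*$ for every participating owner, whence $U_i(x_i^*) = (p_t^*-r_i)^2/(4\alpha_i) \ge 0 = U_i(0)$ (and, as you note, $0\in[0,b_i]$ so the conclusion survives even if the unconstrained maximizer is clipped at $b_i$, since $U_i$ is then increasing on $[0,b_i]$); dually, \eqref{eqn:ordered-rb} yields $a_m \ge a_{K-1} > a_K \ge p_t^{\max} \ge p_t^*$, so each customer's per-unit saving in \eqref{eqn:cost-savings} is non-negative. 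What your approach buys is a strictly stronger conclusion: the paper only establishes that participants best-respond at the SE (a Nash-type property), whereas you certify that participating weakly dominates staying out, which is what individual rationality actually asserts. The only interpretive step you rely on---that $p_t^{\min}=r_{J-1}$ and that $p_t^{\max}$ sits at the intersection below $a_K$---is consistent with Section~\ref{sec:payment} and Fig.~\ref{fig:doubleauction}, though the paper never states it explicitly; flagging that dependence, as you do, is appropriate.
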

\begin{theorem}
The proposed auction mechanism is incentive compatible, i.e., truthful auction is the best strategy for any RU $i\in\mathcal\{1,2,\hdots,J-1\}$ and SFC $m\in\mathcal\{1,2,\hdots,K-1\}$.
\label{theorem:3}
\end{theorem}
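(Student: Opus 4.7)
The plan is to verify incentive compatibility on each side of the market separately. First I would show that no RU $i\in\{1,\ldots,J-1\}$ can strictly improve its true utility \eqref{eqn:utility} by reporting some $\hat r_i\neq r_i$ (or a false $\hat\alpha_i\neq\alpha_i$) to the auctioneer; then I would carry out the symmetric argument for any SFC $m\in\{1,\ldots,K-1\}$ reporting a false bid $\hat a_m$. In both directions the argument would split into two cases, according to whether the misreport perturbs the winning set determined by the orderings \eqref{eqn:ordered-rp} and \eqref{eqn:ordered-rb} or leaves that set intact.

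For the RU side, suppose first that $\hat r_i$ is large enough to push $i$ out of the $J-1$ winners. Then $i$ is no longer allocated any storage and its realized utility drops to $0$; since truth-telling was individually rational by Corollary~\ref{corollary:1}, the deviation is strictly worse. Suppose instead that the reported $\hat r_i$ keeps $J$ and $K$ unchanged. I would substitute $\hat r_i$ into the closed form \eqref{eqn:SE-price} to obtain $p_t^*(\hat r_i)$, substitute that into the best response \eqref{eqn:SE-energy} to obtain $\hat x_i^*=(p_t^*(\hat r_i)-\hat r_i)/(2\alpha_i)$, and finally evaluate the \emph{true} utility $(p_t^*(\hat r_i)-r_i)\hat x_i^*-\alpha_i(\hat x_i^*)^{2}$. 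Differentiating this composite with respect to $\hat r_i$ and using the explicit weighted-average form of $p_t^*$, the first-order condition at $\hat r_i=r_i$ should vanish; strict concavity of $U_i$ in $x_i$ then upgrades the stationary point to a global maximum. The equal-allocation rule \eqref{eqn:burden} enters at this stage to guarantee that the excess-supply burden $\eta_i$ is independent of $\hat r_i$, which prevents a misreporting RU from strategically shifting burden onto its neighbors.

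The SFC side is essentially a mirror image. A downward deviation $\hat a_m<a_m$ risks dropping $m$ out of $\{1,\ldots,K-1\}$, in which case it receives no shared storage and loses all savings; by Corollary~\ref{corollary:1} this is strictly worse than the positive savings available at truth. An upward deviation $\hat a_m>a_m$ keeps $m$ in the winning set but raises $p_t^*$ through the $\sum_{m=1}^{K-1}a_m$ term of \eqref{eqn:SE-price}, which in turn shrinks the savings margin $(a_m-p_t^*)$ of SFC $m$ itself; once more the equal-allocation rule decouples the quantity received from the individual bid, so the net effect on savings is monotone in $p_t^*$ alone and truthful reporting is optimal.

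The main obstacle I anticipate is that $p_t^*$ is not a classical Vickrey ``critical'' price: \eqref{eqn:SE-price} depends on every reported $r_i$ and $a_m$ through a $1/\alpha_i$-weighted average, so the standard VCG envelope argument does not apply off-the-shelf. The heart of the proof is therefore the algebraic identity that the shift in $p_t^*$ induced by a small perturbation of $\hat r_i$ is exactly offset by the RU's own best-response adjustment of $x_i^*$, so that the derivative of the true utility with respect to $\hat r_i$ vanishes at truth. The other delicate point is the discrete boundary case in which a deviation changes $J$ or $K$; handling it cleanly requires the monotone orderings \eqref{eqn:ordered-rp}--\eqref{eqn:ordered-rb} together with the observation that any participant forced out of the winning set obtains zero surplus, which is strictly worse than the positive surplus at truth guaranteed by Theorem~\ref{theorem:1} and Corollary~\ref{corollary:1}.
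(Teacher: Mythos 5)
Your strategy is fundamentally different from the paper's, and its central step does not go through. The paper never analyzes misreports of $r_i$ or $a_m$: its proof only notes that the SE quantities $\mathbf{x}^*$ are unique and stable (Theorems~\ref{theorem:1} and~\ref{theorem:2}) and then invokes the result of \cite{Huang-doubleauction:2002} that, once the supplied quantities are fixed, the $(J-1,K-1)$ participation cutoff together with the allocation rule \eqref{eqn:allocation-rule} and equal burden sharing \eqref{eqn:burden} is strategy-proof; the ``truthfulness'' being claimed there concerns falsifying the allocated quantities, not the reported prices. You instead attack dominant-strategy truthfulness of the reported reservation prices and bids via an envelope argument, which is a stronger claim --- and the key identity you stake the proof on is false.

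Concretely, \eqref{eqn:SE-price} simplifies to $p_t^* = \frac{\sum_m a_m}{2(K-1)} + \frac{\sum_j r_j/(2\alpha_j)}{2\sum_j 1/(2\alpha_j)}$, so $\partial p_t^*/\partial \hat r_i = \frac{1}{4\alpha_i\sum_j 1/(2\alpha_j)}>0$. Evaluating the true utility at the induced best response $\hat x_i=(p_t^*(\hat r_i)-\hat r_i)/(2\alpha_i)$ and differentiating gives
\begin{equation*}
\frac{d}{d\hat r_i}\Bigl[(p_t^*(\hat r_i)-r_i)\hat x_i-\alpha_i\hat x_i^2\Bigr]
=\frac{\partial p_t^*}{\partial \hat r_i}\,\hat x_i+\bigl(p_t^*-r_i-2\alpha_i\hat x_i\bigr)\frac{d\hat x_i}{d\hat r_i}.
\end{equation*}
At $\hat r_i=r_i$ the bracketed factor vanishes by \eqref{eqn:SE-energy} --- that is the envelope cancellation you anticipate --- but the first term survives and equals $\frac{\partial p_t^*}{\partial \hat r_i}x_i^*>0$ whenever $x_i^*>0$. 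The best-response adjustment offsets only the \emph{indirect} effect of the misreport through $\hat x_i$, not the \emph{direct} effect of $\hat r_i$ on the price entering $(p_t^*-r_i)x_i$. Hence truth-telling is not a stationary point: a winning RU strictly gains by inflating $\hat r_i$ so long as it stays among the $J-1$ winners, precisely because \eqref{eqn:SE-price} is not a Vickrey-style critical price independent of the winner's report (the obstacle you yourself flag). The symmetric SFC argument has the same problem, since perturbing $\hat a_m$ moves both $p_t^*$ and every $x_i^*$, and nothing forces the two effects to cancel. Carried out honestly, your computation contradicts the theorem rather than proving it; the only way to recover the stated result is to retreat to the paper's weaker notion of truthfulness (no manipulation of the allocated quantities given the stable SE), which is exactly what the paper's citation to \cite{Huang-doubleauction:2002} accomplishes.
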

\begin{proof}
To validate Theorem \ref{theorem:3}, first we note that the choice of strategies by the RUs always guaranteed to converge to a unique SE, i.e., $\mathbf{x^*} = [x_1^*, x_2^*, \hdots, x_{J-1}^*]$ as proven in Theorem~\ref{theorem:1} and Theorem \ref{theorem:2}, which confirms the stability of their selections. Now, according to \cite{Huang-doubleauction:2002}, once the owners of an auction process, i.e., the RUs in this proposed case, decide on a stable amount of commodity, i.e., $x_i^*~\forall i\in\{1,2,\hdots,J-1\}$, to supply to or to share with the customers, the auction process always converges to a \emph{strategy-proof} auction if the allocation of commodity is conducted according to the rules described in \eqref{eqn:allocation-rule} and \eqref{eqn:burden}.  Therefore, neither any RU nor any SFC will have any intention to falsify their allocation once they adopt \eqref{eqn:allocation-rule} and \eqref{eqn:burden}~\cite{Huang-doubleauction:2002} for sharing the storage space of the RUs from their SE amount. Therefore, the auction process is \emph{incentive compatible}, and thus Theorem~\ref{theorem:3} is proved.
\end{proof}

\subsection{Adaptation to Time-Varying Case}\label{sec:time-varying}
To extend the proposed scheme to a time-varying case, we assume that the ES sharing scheme works in a time-slotted fashion where each time slot has a suitable time duration based on the type of application, e.g., $1$ hour \cite{Derin:2010}. It is considered that in each time slot all the RUs and SFCs take part in the proposed ES sharing scheme to decide on the parameters such as the auction price and the amount of ESs that needs to be shared. However, in a time-varying case, the amount of ES that an RU shares at time slot $t$ may be affected by the burden that the RU needed to bear in the previous time slot $t-1$. To this end, first we note that once the number of participating RUs and SFCs is decided for a particular time slot via the determination rule, the rest of the procedures, i.e., the payment and allocation rules are executed following the descriptions in Section \ref{sec:payment} and \ref{sec:allocation} respectively for the respective time slot. Now, if the total number of RUs and SFCs is fixed, the RUs and SFCs that participate in the modified auction scheme in any time slot is determined by their respective reservation and bidding prices for that time slot. Further, the proposed auction process may evolve across different time slots based on the change of the amount of ES that each participating RU $i$ may want to share and the change in the total amount of ES required for the SFCs in different time slots. Now, before discussing how the proposed modified auction scheme can be extended to a time-varying environment\footnote{Certain loads such as lifts and water pumps in large apartment buildings are not easy to schedule as they are shared by different users of the buildings. Hence, we focus on the time variation of the storage sharing process by the RUs of the considered system.}, first we define the following parameters:\\
$t$: index of time slot.\\
$T$: total number of time slot.\\
$r_{i,t}$: the reservation price of RU $i\in\mathcal{N}$ at time slot $t$.\\
$\mathbf{r}_i = [r_{i,1}, r_{i,2}, \hdots, r_{i,T}]$: is the reservation price vector for RU $i\in\mathcal{N}$.\\
$x_{i,t}$: the fraction of ES space that the RU $i$ wants to shares with the SFCs at time slot $t$.\\
$\mathbf{x}_i = [x_{i,1}, x_{i,2},\hdots, x_{i,T}]$: the vector of ES space shared by RU $i$ with the SFC during the total considered times.\\
$b_{i,t}$: maximum available ES of RU $i$ for sharing at time slot $t$.\\
$a_{m,t}$: the bidding price of each SFC $m\in\mathcal{M}$ at time slot $t$.\\
$\mathbf{a}_m = [a_{m,1}, a_{m,2}, \hdots, a_{m,T}]$: is the reservation price vector for SFC $m\in\mathcal{N}$.\\
$q_{m,t}$: the required ES space by each SFC $m$ at time slot $t$.\\
$p_{t,t}$: the auction price at time slot $t$.\\
$U_{i,t}$: the benefit that each RU $i$ achieves at time slot $t$.\\
$Z_t$: the average cost saving per SFC at time slot $t$.\\
$\eta_{i,t}$: the burden that is shared by each participating RU at time slot $t$.\\
$K_t$: number of participating SFCs in the modified auction scheme at time slot $t$.\\
$J_t$: number of participating RUs in the modified auction scheme at time slot $t$.\\
To this end, the utility function $U_{i,t}$ of each RU $i$ and the average cost savings $Z_t$ per SFC at time slot $t$ can be defined as
\begin{eqnarray}
U_{i,t}(x_{i,t}) = (p_{t,t}-r_{i,t})x_{i,t} - \alpha_i x_{i,t}^2,
\label{eqn:time-varying:1}
\end{eqnarray}
and
\begin{eqnarray}
Z_t = \left(\frac{\sum_{m=1}^{K_t-1}(a_{m,t} - p_{t,t})}{K_t - 1}\right)\sum_{i=1}^{J-1}x_{i,t}
\label{eqn:time-varying:2}
\end{eqnarray}
respectively\footnote{Please note that in each time slot $t$, \eqref{eqn:time-varying:1} and \eqref{eqn:time-varying:2} are related with each other in a similar manner as \eqref{eqn:obj-RU} and \eqref{eqn:cost-savings} are related for the static case. However, unlike the static case, the execution of the auction process in each time slot $t$ is affected by the value of parameters such as $x_{i,t}$ and $p_t$ for that particular time slot.}. 

Now, at time slot $t$, the determination rule of the proposed scheme determines the number of participating RUs and SFCs based on their reservation and bidding prices for that time slot. The number of participation is also motivated by the available ES space of each RU and the requirement of each SFC. However, unlike the static case, in a time-varying environment the offered ES space by an RU at time slot $t$ is influenced by its contribution to the auction process in the previous time slot. For instance, if an RU $i$ receives a burden $\eta_{i,t-1}$ in time slot $t-1$, its willingness to share ES space $x_{i,t}$ at time slot $t$ may reduce. $x_{i,t}$ is also affected by the  maximum amount of ES $b_{i,t}$ available to RU $i$ at $t$. For simplicity, we assume that $b_{i,t}$ and $\alpha_{i,t}$ do not change over different time slots. Therefore, an RU $i$ can offer to share the same amount of ES space $x_{i,t}$ to the SFCs at time slot $t$ \emph{if it did not share} any amount in time slot $t-1$. An analogous example of such arrangement can be found in FIT scheme with ES device in which households are equipped with a dedicated battery to sell the stored electricity to the grid \cite{Goran:2011}. Nonetheless, $x_{i,t}$ is also affected by the amount of burden $\eta_{i,t-1}$ that an RU needed to bear due to an oversupply of ES spaces, if there was any, in the previous time slot. To this end, the amount of ES space that an RU $i$ can offer to the SFCs at $t$ can be defined as
\begin{eqnarray}
x_{i,t} = \begin{cases}
x_{i,t-1} & \text{if $i\notin J_{{t-1}}$}\\
\max(b_{i,t} -  (x_{i,t-1} - \eta_{i,t-1}),0) & \text{otherwise}
\end{cases}.
\label{eqn:timeVary-x}
\end{eqnarray}
The SFC $m$, on the other hand, decides on the amount of ES $q_{m,t}$ that it needs to share from the RUs at $t$ based on the random requirement of the shared facilities at $t$, the available shared ES space $q_{m,t-1}$ from time slot $t-1$,  and the random generation of renewable energy sources, where appropriate. Hence,
\begin{eqnarray}
q_{m,t} = f(q_{m,t-1}, \text{renewables, facility requirement}).
\end{eqnarray}
Now, if we assume that the fraction of shared ES available from previous time slot is negligible, i.e., $q_{m,t-1}\approx 0$, the requirement $q_{m,t}$ can be assumed to be random for each time slot $t$ considering the random nature of both renewable generation and energy requirement of shared facilities. Note that this assumption is particularly valid if the SFC uses all its shared ESs from the previous time slot for meeting the demand of the shared facilities and cannot use them in considered time slot. Nonetheless, please note that this assumption does not imply that the inter-temporal relationship between the auction process across different time slots is non-existent. The auction process in one time slot still depends on other time slots due to the inter-temporal dependency of $x_{i,t}$ via \eqref{eqn:timeVary-x}.

To this end, for the modeled $x_{i,t}~\forall i\in\mathcal{N}$ and $q_{m,t}~\forall m\in\mathcal{M}$, the proposed modified auction scheme studied in Section~\ref{sec:auction-ownership} can be adopted in each time slot $t=1, 2, \hdots, T$ with a view to maximize \eqref{eqn:time-varying:1} and \eqref{eqn:time-varying:2} $\forall t$. It is important to note that the reservation price vector $\mathbf{r}_i$ of each RU $i\in\mathcal{N}$  and the bidding price vector $\mathbf{a}_m$ of each SFC $m\in\mathcal{M}$ can be modeled through any existing time-varying pricing schemes such as time-of-use price \cite{Fang-J-CST:2012}. Now, $\mathbf{p}_{t}^* =  [p_{t,1}^*, p_{t,2}^*,\hdots, p_{t,T}^*]$ and $\mathbf{x}^* = [\mathbf{x}_1^*, \mathbf{x}_2^*, \hdots, \mathbf{x}_N^*]$ constitute the solutions of the proposed modified auction scheme in a time-varying condition, if the $\mathbf{x}^*$ comprises the solution vector of all ES spaces shared by the participating RUs in each time slot $t = 1, 2, \hdots, T$ for the auction price vector $\mathbf{p}_t^*$. Further, all the auction rules adopted in each time slot of the proposed time-varying case will be similar to the rules discussed in Section~\ref{sec:auction-ownership}. Hence, the solution of the proposed modified auction scheme for a time-varying environment also possesses the incentive compatibility and individual rationality properties for each time slot.
\section{Case Study}\label{sec:case-study}
For numerical case studies, we consider a number of RUs at different blocks in a smart community that are interested in allowing the SFCs of the community to jointly share their ES devices. We stress that when there are a large number of RU and SFCs in the system, the reservation and bidding prices will vary significantly from one another. Therefore, it will be difficult to find an intersection point to determine the highest reservation price $p_t^\text{max}$ according to the determination rule. So, in this paper, we limit ourself to around $6-10$ RUs. However, having 6-10 RUs can in fact cover a large community, e.g., through aggregation such as discussed in \cite{Gkatzikis:2013,Wayes-J-TSG:2012}. Here, each RU is assumed to be a group of $[5,~25$] households, where each household is equipped with a battery of capacity $25$ kilo-Watt hour (kWh)~\cite{battery:2013}. The reluctance parameter of all RUs are assumed to be similar, which is taken from range of $[0, 0.1]$. It is important to note that $\alpha_i$ is considered as a design parameter in the proposed scheme, which we used to map the reluctance of each RU to share its ES with the SFCs. Such reluctance of sharing can be affected by parameters like ES capacity, the condition of the environment (if applicable) and the RU's own requirement. Now, considering the different system parameters in our proposed scheme, we capture these two extremes with 0 (not reluctant) and 0.1 (highly reluctant). The required electricity storage for each SFC is assumed to be within the range of $[100,~500]$ kWh. Nevertheless, the required ES for sharing could be different if the usage pattern by the users changes. Since, the type of ESs (and their associated cost) used by different RUs can vary significantly~\cite{storageType:2012}, the choices of reservation price to share their ESs with the SFCs can vary considerably as well. In this context, we consider that the reservation price  set by each RU and SFC is taken from a range of [20, 70]. It is important to note that all chosen parameter values are particular to this study only, and may vary according the availability and number of RUs, requirements of SFCs, trading policy, time of the day/year and the country.

\begin{figure}[t!]
\centering
\includegraphics[width=\columnwidth]{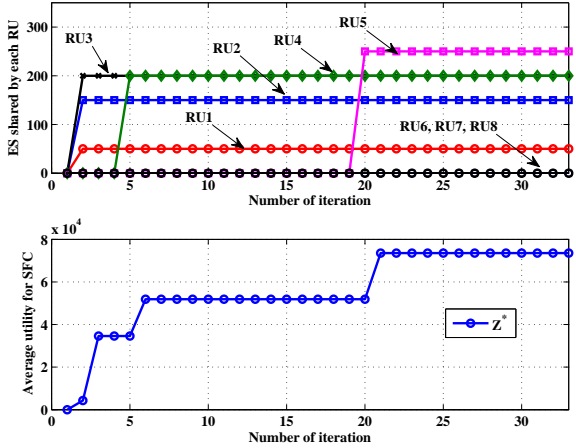}
\caption{Convergence of Algorithm~\ref{algorithm:1} to the SE. At SE, the average utility per SFC reaches its maximum and the ES that each RU wants to put into the market for share reaches a steady state level that maximize their benefits.} \label{fig:Convergence}
\end{figure}

Now, we first show the convergence of Algorithm~\ref{algorithm:1} to the SE of the SLMFG in Fig.~\ref{fig:Convergence}. For this case study, we assume that there are five SFCs in the smart grid community that are taking part in an auction process with eight RUs. From Fig.~\ref{fig:Convergence}, first we note that the proposed SLMFG reaches the SE after $20$ interations when the average cost savings per SFC reaches its maximum. Hence, the convergence speed, which is just few seconds, is reasonable. Nonetheless, an interesting property can be observed when we examine the choice of ES by each RU to put into the market for sharing. As can be seen from the figure, on the one hand, RU $1$, RU $2$, and RU $3$ reach the SE much quicker than RU $4$ and RU $5$. On the other hand, no interest for sharing any ES is observed for RU $4, 5$ and $6$. 

This is due to the fact that as the interaction between the auctioneer and the RUs continues, the auction price $p_t$ is updated in each iteration. In this regard, once the auction price for any RU becomes larger than its reservation price, it put all its reserve ES to the market with an intention to be shared by the SFCs. Due to this reason, RU $1$, RU $2$, and RU $3$ put their ESs in the market much sooner, i.e., after the $2^\text{nd}$ iteration, than RU $4$ and RU $5$ with higher reservation prices, whose interest for sharing ES reaches the SE once the auction price is encouraging enough for them to share their ESs after the $5^\text{th}$ and $20^\text{th}$ iterations. Unfortunately, the utilities of RU $6$, $7$, and $8$ are not convenient enough to take part in the auction process, and therefore their shared ES fractions are $0$. 

We note that the demonstration of the convergence of the SLMFSG to a unique SE subsequently demonstrates the proofs of Theorem~\ref{theorem:1}, Theorem~\ref{theorem:2}, Theorem~\ref{theorem:3} and Corollary~\ref{corollary:1}, which are strongly related to the SE as explained in the previous section. Now, we would like to investigate how the reluctance parameters of the RUs may affect their average utility from Algorithm~\ref{algorithm:1}, and thus affecting their decisions to share ES. To this end, we first determine the average utility that is experienced by each RU and SFC for a reluctance parameter of $\alpha_i = 0.001~\forall i$. Then considering the outcome as a benchmark, we show the effect of different reluctance parameters on the achieved average benefits of each SFC and RU in Table~\ref{table:1}. The demonstration of this property is necessary in order to better understand the working principle of the designed technique for ES sharing.

\begin{table}[t]
\centering
\caption{Change of average utility achieved by each SFC and each RU in the network (according to Algorithm~\ref{algorithm:1}) due to the change of the reluctance of each RU for sharing one kWh ES with the SFC.}
\includegraphics[width=\columnwidth]{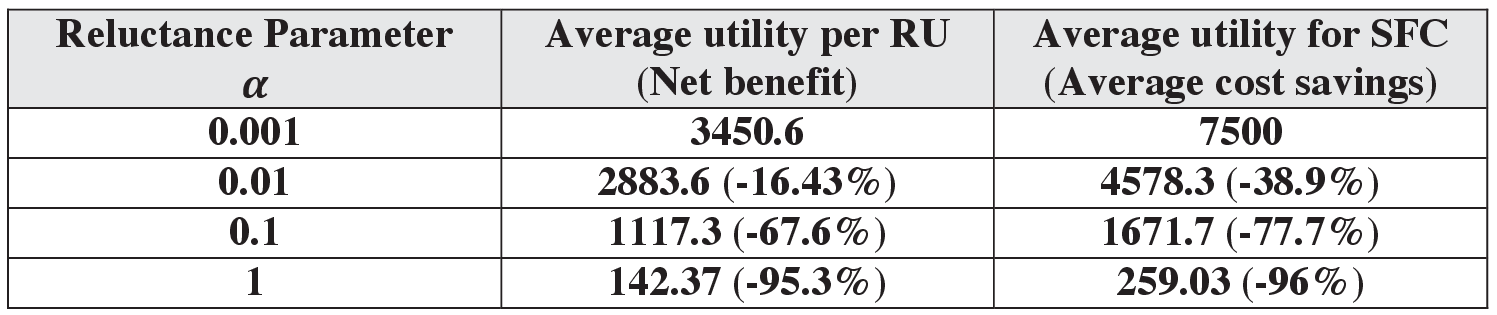}
\label{table:1}
\end{table}

According to Table~\ref{table:1}, as the reluctance of each RU increases, it becomes more uncomfortable, i.e., lower utility, to put its ES in the market to be jointly owned by the SFCs. As a consequence, it also affects the average utility achieved by each SFC. As shown in Table~\ref{table:1}, the reduction in average utilities per RU are $16.73\%$, $67.6\%$ and $95.3\%$ respectively compared to the average utility achieved by an RU at $\alpha_i = 0.001$ for every ten times reduction in the reluctance parameter. For similar settings, the reduction of average utility for the SFCs are $38.9\%$, $77.7\%$ and $96\%$ at $\alpha_i = 0.01, 0.1$ and $1$ respectively. Therefore, the proposed scheme will enable the RUs to put more storage in the auction  market if the related reluctance for this sharing is small. Note that although the current investment cost of batteries is very high compared to their relative short life times, it is expected that battery costs will go down in the near future \cite{Wang-JTSG:2013}  and become very popular for addressing intermittency of renewables \cite{Tesla:2015}. We have foreseen such a near future when our proposed scheme will be applicable to gain the benefit of storage sharing and thus motivate the RUs to keep their $\alpha_i~\forall i$ small. According to the observation from Table~\ref{table:1}, it can further be said that if the reluctance parameters of RUs change over either different days or different time slots, the performance of the system in terms of average utility per RU and average cost savings per SFC will change accordingly for the given system parameters.
\begin{figure}[t]
\centering
\includegraphics[width=\columnwidth]{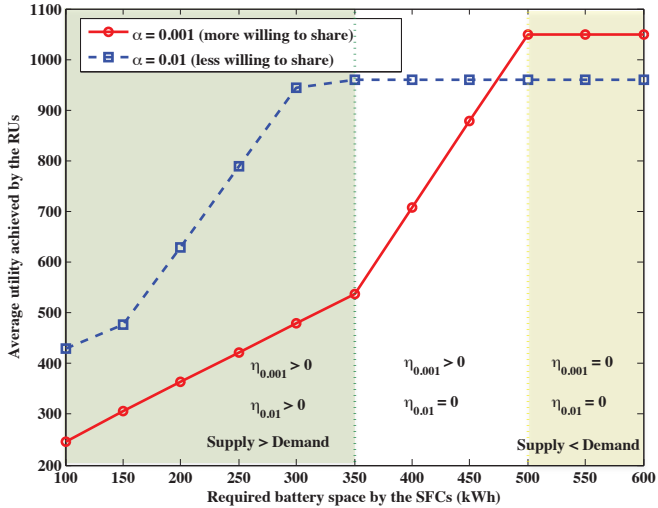}
\caption{Effect of change of required ES amount by the SFCs on the achieved average utility per RU.} \label{fig:UtilityVsBuyAmount}
\end{figure}

\begin{table*}[t]
\centering
\caption{Comparison of the change of average utility per RU in the smart grid system as the required total amount of energy storage required by the SFCs varies.}
\begin{tabular}{|c|c|c|c|c|c|c|}
\hline
Required ES space by the SFCs & 200 & 250 & 300 & 350 & 400 & 450\\
\hline
Average utility (net benefit) of RU for equal distribution (ED) scheme & 536.52 & 581.85 & 624.52 & 669.85 & 715.19 & 757.85\\
\hline
Average utility (net benefit) of RU for FIT scheme & 537.83 & 583.16 & 626.83 & 673.16 & 717.50 & 759.16\\
\hline
Average utility (net benefit) of RU for proposed scheme & 629.82 & 789.82 & 944.26 & 960.09 & 960.09 & 960.09\\
\hline
Percentage improvement (\%) compared to ED scheme &17.4 & 35.74 & 51.19 & 43.32 & 34.24 & 26.68\\
\hline
Percentage improvement (\%) compared to FIT scheme &17.1 & 35.43 & 50.63 & 42.61 & 33.81 & 26.46\\
\hline
\end{tabular}
\label{table:2}
\end{table*}

Once all the participating RUs put their ES amount into the auction market, they are distributed according to the allocation rule described in \eqref{eqn:allocation-rule} and \eqref{eqn:burden}. In this regard, we investigate how the average utility of each RU is altered as the total storage amount required by the SFCs changes from in the network. For this particular case, the considered total ES requirement of the SFCs is assumed to be $100, 150, 200, 250, 300, 350, 400, 450, 500, 550$ and $600$. In general, as shown in Fig.~\ref{fig:UtilityVsBuyAmount}, the average utility of each RU initially increases with the increase required by the SFCs and eventually becomes saturated to a stable value. This is due to the fact that as the required amount of ES increases, the RU can share more of its reserved ES that it put into the market with the SFCs with the determined auction price from the SLMFSG. Hence, its utility increases. However, each RU has a particular fixed ES amount that it puts into the market to share. Consequently, once the shared  ES amount reaches its maximum, even with the increase of requirement by the SFCs the RU cannot share more, i.e., $\eta_i = 0$. Therefore, its utility becomes stable without any further increment. Interestingly, the proposed scheme, as can be seen in Fig.~\ref{fig:UtilityVsBuyAmount}, favors the RUs with higher reluctance more when the ES requirement by the SFCs is relatively lower and favors the RUs with lower reluctance during higher demands. This is due to the way we have designed the proposed allocation scheme, which is dictated by the burden in \eqref{eqn:burden} and the allocation of ES through \eqref{eqn:allocation-rule}. We note that, according to \eqref{eqn:SE-energy}, if $\alpha_i$ is lower, the RU $i$ will put a higher amount of ES in the market to share. However, if the total required amount of ES is lower for the SFCs, it would put a higher burden on the RUs to carry. As a consequence, the relative utility from auction is lower. Nevertheless, if the requirement of the SFCs is higher, the sharing brings significant benefits to the RUs as can be seen from Fig.~\ref{fig:UtilityVsBuyAmount}. On the other hand, for higher reluctance, RUs tend to share a lower ES amount, which then enables them to endure a lower burden in case of lower demands from the SFCs. This consequently enhances their achieved utility. Nonetheless, if the requirement is higher from the SFCs, their utility reduces subsequently compared to the RUs with lower reluctance parameters. Thus, from observing the effects of different $\alpha_i$'s on the average utility per RU in Fig. \ref{fig:UtilityVsBuyAmount}, we understand that, if the total required ES is smaller, RUs with higher reluctance benefit more and vice versa. This illustrates the fact that even RUs with high unwillingness to share their ESs can be beneficial for SFCs of the system if their required ESs are small. However, for a higher requirement, SFCs would benefit more from having RUs with lower reluctances as they will be interested in sharing more to achieve higher average utilities.

Now, we discuss the computational complexity of the proposed scheme, which is greatly reduced by the determination rule of the modified auction scheme as this rule determines the actual number of participating RUs and SFCs in the auction. We also note that after determining the number of participating SFCs and RUs, the auctioneer iteratively interacts with each of the RUs and sets the auction price with a view to increase the average savings for the SFC. Therefore, the main computational complexity of the modified auction scheme stems from the interactions between the auctioneer and the participating RUs to decide on the auction price. In this context, the computational complexity of the problem falls within a category of that of a single leader multiple follower Stackelberg game, whose computational complexity, which can be approximated to increase linearly with the number of followers \cite{Wayes-J-TSG:2012}, and is shown to be reasonable in numerous studies such as in \cite{Tushar-TIE:2014} and \cite{Wayes-J-TSG:2012}. Hence, the computational complexity is feasible for adopting the proposed scheme.

Having an insight into the properties of the proposed auction scheme, we now demonstrate how the technique can benefit the RUs of the smart network compared to existing ES allocation schemes such as equal distribution (ED)~\cite{Wayes-J-TSG:2012} and FIT schemes \cite{Goran:2011}. ED is essentially an allocation scheme that allows the SFCs to meet their total storage requirements by sharing the total requirement equally from each of the participating RUs. We assume that if the shared ES amount exceeds the total amount of reservation storage that an RU puts into the market, the RU will share its full reservation amount. In FIT, which is a popular scheme for energy trading between consumers and the grid, we assume that each RU prefers to sell the same storage amount of energy to the grid at an FIT price rate, e.g., $22$ cents/kWh~\cite{LIPA} instead of sharing the same fraction of storage with the SFC. To this end, the resulting average utilities that each RU can achieve from sharing its ES space with the SFCs by adopting the proposed, ED, and FIT schemes are shown in Table~\ref{table:2}.

From Table~\ref{table:2}, first we note that as the amount of required ES by the SFCs increases the average utility achieved per RU also increases for all the cases. The reason for this increment is explained in Fig.~\ref{fig:UtilityVsBuyAmount}. Also, in all the studied cases, the proposed scheme shows a considerable performance improvement compared to the ED  and FIT schemes. An interesting trend of performance improvement can be observed if we compare the performance of the proposed scheme with the ED and FIT performances for each of the ES requirements. In particular, the performance of the proposed scheme is higher as the requirement of the ES increases from $200$ to $350$. However, the improvement is relatively less significant as the ES requirement switches from $400$ to $450$. This change in performance can be explained as follows:

In the proposed scheme, as we have seen in Fig.~\ref{fig:UtilityVsBuyAmount}, the amount of ES shared by each participating RU is influenced by their reluctance parameters. Hence, even the demand of the SFCs could be larger, the RUs may choose not to share more of their ES spaces once their reluctance is limited. In this regard, the RUs in the current case study increase their share of ES as the requirement by the  SFCs increases, which in turn produces higher revenue for the RUs. Furthermore, once the RUs choice of ESs reach the saturation, the increase in demand, i.e., from $200$ to $350$ in this case, does not affect their share. As a consequence, their performance improvement is not as noticeable as the previous four cases. Nonetheless, for all the considered cases, the auction process performs superior to the ED scheme with an average performance improvement of $34.76\%$, which clearly shows the value of the proposed methodology to adopt joint ES sharing in smart grid. The performance improvement with respect to the FIT scheme, which is $34.34\%$ on average, is due to the difference between the determined auction price and the price per unit of energy for the FIT scheme.
\begin{figure}[t!]
\centering
\includegraphics[width=0.9\columnwidth]{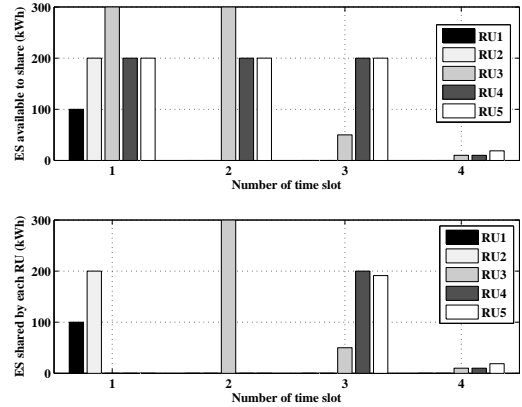}
\caption{Demonstration of how the proposed modified auction scheme can be extended to time varying system. The reservation ES amount varies by the RUs varies between different time slots based on their sharing amount in the previous time slot. The total required storage by the SFCs is chosen randomly due to the reasons explained in Section~\ref{sec:time-varying}.} \label{fig:time-varying}
\end{figure}

Finally, we show how the decision making process of each RU in the system is affected by its decision in the previous time slot and the total storage requirement by the SFCs. The total number of time slots that are considered to show this performance analysis is four. In this context, we assume that there are five RUs in the system with ES of $100, 200, 300, 200$ and $200$ kWh respectively to share with the SFCs. The total ES requirements of the SFCs for considered four time slot are $500, 250, 500$, and 100. Please note that these numbers are considered for this case study only and may have different values for different scenarios. Now, in Fig.~\ref{fig:time-varying}, we show the available ES to each of the RUs at the begining of each time slot and how much they are going to share if the modified auction scheme is adopted in each time slot. For a simple analysis, we assume that once an RU shares its total available ES, it cannot share its ES for the remaining of the time slots. The reservation prices are considered to change from one time to the next based on a predefined time of use price scheme. Now, as can be seen from Fig.~\ref{fig:time-varying}, in time slot $1$, RU1 and RU2 share all their available ESs with the SFC, whereby other RUs do not share their ESs due to the reasons explained in Fig.~\ref{fig:Convergence}. Since, the total requirement is $500$, therefore neither of RU1 and RU2 needs to carry any burden. In time slot $3$, only RU3 shares its ESs of $300$ to meet the requirement. As the SFC's requirement is lower than the supply, RU3 needs to carry a burden of $50$ kWh. Similarly, in time slot $3$ and $4$, all of RU3, RU4 and RU5 take part in the energy auction scheme as they have enough ES to share with the SFC. However, the ES to share in time slot $4$ stems from the burden of oversupply from time slot $3$. The scheme is not shown for more than time slot $4$ as the available ES from all RUs is already shared by the SFCs by the end of time slot $4$. Thus, the proposed modified auction scheme can successfully capture the time variation if the scheme is modified as given in Section~\ref{sec:time-varying}.

\section{Conclusion}\label{sec:conclusion}
In this paper, we have modeled a modified auction based joint energy storage ownership scheme between a number of residential units (RUs) and shared facility controllers (SFCs) in smart grid. We have designed a system and discussed the determination, payment and allocation rule of the auction, where the payment rule of this scheme is facilitated by a Single-leader-multiple-follower Stackelberg game (SLMFSG) between the auctioneer and the RUs. The properties of the auction scheme and the SLMFSG have been studied, and it has been shown that the proposed auction possesses the \emph{individual rationality} and the \emph{incentive compatibility} properties leveraged by the unique Stackeberg equilibrium of the SLMFSG. We have proposed an algorithm for the SLMFSG, which has been shown to be guaranteed to reach the SE and that also facilitates the auctioneer and the RUs to decide on the auction price as well as the amount of ES to be put into the market for joint ownership. 

A compelling extension of the proposed scheme would be to study of the feasibility of scheduling of loads such as lifts and water machines in shared space. Another  interesting research direction would be to determine how a very large number of SFCs or RUs with different reservation and bidding prices can take part in such a modified auction scheme. One potential way to look at this problem can be from a cooperative game-theoretic point-of-view in which the SFCs and RUs may cooperate to decide on the amount of reservation ES and bidding price they would like to put into the market so as to participate in the auction and benefit from sharing. Another very important, yet interesting, extension of this work would be to investigate how to quantify the reluctance of each RU to participate in the ES sharing. Such quantification of reluctance (or, convenience) will also enable the practical deployment of many energy management schemes already described in the literature.

\begin{thebibliography}{10}
\providecommand{\url}[1]{#1}
\csname url@samestyle\endcsname
\providecommand{\newblock}{\relax}
\providecommand{\bibinfo}[2]{#2}
\providecommand{\BIBentrySTDinterwordspacing}{\spaceskip=0pt\relax}
\providecommand{\BIBentryALTinterwordstretchfactor}{4}
\providecommand{\BIBentryALTinterwordspacing}{\spaceskip=\fontdimen2\font plus
\BIBentryALTinterwordstretchfactor\fontdimen3\font minus
  \fontdimen4\font\relax}
\providecommand{\BIBforeignlanguage}[2]{{%
\expandafter\ifx\csname l@#1\endcsname\relax
\typeout{** WARNING: IEEEtran.bst: No hyphenation pattern has been}%
\typeout{** loaded for the language `#1'. Using the pattern for}%
\typeout{** the default language instead.}%
\else
\language=\csname l@#1\endcsname
\fi
#2}}
\providecommand{\BIBdecl}{\relax}
\BIBdecl

\bibitem{Silvestre-TII:2014}
M.~L.~D. Silvestre, G.~Graditi, and E.~R. Sanseverino, ``A generalized
  framework for optimal sizing of distributed energy resources in micro-grids
  using an indicator-based swarm approach,'' \emph{IEEE Trans. Ind. Inform.},
  vol.~10, no.~1, pp. 152--162, Feb 2014.

\bibitem{Garcia-TII:2014}
P.~Garc{\'{i}}a, C.~A. Garc{\'{i}}a, L.~M. Fern{\'{a}}ndez, F.~Llorens, and
  F.~Jurado, ``{ANFIS-based control of a grid-connected hybrid system
  integrating renewable energies, hydrogen and batteries},'' \emph{IEEE Trans.
  Ind. Inform.}, vol.~10, no.~2, pp. 1107--1117, May 2014.

\bibitem{Fang-J-CST:2012}
X.~Fang, S.~Misra, G.~Xue, and D.~Yang, ``Smart grid - {T}he new and improved
  power grid: {A} survey,'' \emph{IEEE Commun. Surveys Tuts.}, vol.~14, no.~4,
  pp. 944--980, Oct 2012.

\bibitem{Liu-STSP:2014}
Y.~Liu, C.~Yuen, S.~Huang, N.~U. Hassan, X.~Wang, and S.~Xie, ``Peak-to-average
  ratio constrained demand-side management with consumer's preference in
  residential smart grid,'' \emph{IEEE J. Sel. Topics Signal Process.},
  vol.~PP, no.~99, pp. 1--14, Jun 2014.

\bibitem{YiLiu-TIE:2015}
Y.~Liu, C.~Yuen, N.~U. Hassan, S.~Huang, R.~Yu, and S.~Xie, ``Electricity cost
  minimization for a microgrid with distributed energy resources under
  different information availability,'' \emph{IEEE Trans. Ind. Electron.},
  vol.~62, no.~4, pp. 2571--2583, Apr. 2015.

\bibitem{Naveed-TSG:2015}
N.~U. Hassan, Y.~I. Khalid, C.~Yuen, and W.~Tushar, ``Customer engagement plans
  for peak load reduction in residential smart grids,'' \emph{IEEE Trans. Smart
  Grid}, vol.~6, no.~6, pp. 3029--3041, Nov. 2015.

\bibitem{Naveed-Elsevier:2016}
N.~U. Hassan, Y.~I. Khalid, C.~Yuen, S.~Huang, M.~A. Pasha, K.~L. Wood, and
  S.~G. Kerk, ``Framework for minimum user participation rate determination to
  achieve specific demand response management objectives in residential smart
  grids,'' \emph{Elsevier International Journal of Electrical Power \& Energy
  Systems}, vol.~74, pp. 91--103, Jan. 2016.

\bibitem{Tushar-ITS:2015}
W.~Tushar, C.~Yuen, S.~Huang, D.~B. Smith, and H.~V. Poor, ``Cost minimization
  of charging stations with photovoltaics: An approach with {EV}
  classification,'' \emph{IEEE Trans. Intell. Transp. Syst.}, vol. pre-print,
  Aug. 2015, (doi: 10.1109/TITS.2015.2462824).

\bibitem{Shisheng-Elsevier:2015}
S.~Huang, W.~Tushar, C.~Yuen, and K.~Otto, ``Quantifying economic benefits in
  the ancillary electricity market for smart appliances in {Singapore}
  households,'' \emph{Elsevier Sustainable Energy, Grids and Networks}, vol.~1,
  pp. 53--62, Mar. 2015.

\bibitem{Wang-JTSG:2013}
Z.~Wang, C.~Gu, F.~Li, P.~Bale, and H.~Sun, ``Active demand response using
  shared energy storage for household energy management,'' \emph{IEEE Trans.
  Smart Grid}, vol.~4, no.~4, pp. 1888--1897, Dec 2013.

\bibitem{Tushar-TIE:2014}
W.~Tushar, B.~Chai, C.~Yuen, D.~B. Smith, K.~L. Wood, Z.~Yang, and H.~V. Poor,
  ``Three-party energy management with distributed energy resources in smart
  grid,'' \emph{IEEE Trans. Ind. Electron.}, vol.~62, no.~4, Apr. 2015.

\bibitem{klemperer-JES:1999}
P.~Klemperer, ``Auction theory: {A} guide to the literature,'' \emph{Journal of
  Economic Surveys}, vol.~13, no.~3, pp. 227--286, July 1999.

\bibitem{Ma-JTSG:2014}
J.~Ma, J.~Deng, L.~Song, and Z.~Han, ``Incentive mechanism for demand side
  management in smart grid using auction,'' \emph{IEEE Trans. Smart Grid},
  vol.~5, no.~3, pp. 1379--1388, May 2014.

\bibitem{Vickrey-JF:1961}
W.~Vickrey, ``Counterspeculation, auctions, and competitive sealed tenders,''
  \emph{The Journal of Finance}, vol.~16, no.~1, pp. 8--37, Mar 1961.

\bibitem{chaibo-TSG:2014}
B.~Chai, J.~Chen, Z.~Yang, and Y.~Zhang, ``Demand response management with
  multiple utility companies: {A} two-level game approach,'' \emph{IEEE Trans.
  Smart Grid}, vol.~5, no.~2, pp. 722--731, March 2014.

\bibitem{Maharjan-JTSG:2013}
S.Maharjan, Q.~Zhu, Y.~Zhang, S.~Gjessing, and T.~Ba{\c{s}}ar, ``Dependable
  demand response management in the smart grid: {A} {S}tackelberg game
  approach,'' \emph{IEEE Trans. Smart Grid}, vol.~4, no.~1, pp. 120--132, March
  2013.

\bibitem{Siano:2014}
P.~Siano, ``Demand response and smart grids-{A} survey,'' \emph{Elsevier
  Renewable and Sustainable Energy Reviews}, vol.~30, pp. 461--478, Feb 2014.

\bibitem{Denholm:2010}
P.~Denholm, E.~Ela, B.~Kirby, and M.~Milligan, ``The role of energy storage
  with renewable electricity generation,'' National Renewable Energy Laboratory
  (NREL), Colorado, USA, Technical Report, Jan 2010.

\bibitem{Cao:2004}
Y.~Cao, T.~Jiang, and Q.~Zhang, ``Reducing electricity cost of smart appliances
  via energy buffering framework in smart grid,'' \emph{IEEE Trans. Parallel
  Distrib. Syst.}, vol.~23, no.~9, pp. 1572--1582, Sep 2012.

\bibitem{Sechilariu-JTIE:2013}
M.~Sechilariu, B.~Wang, and F.~Locment, ``Building integrated photovoltaic
  system with energy storage and smart grid communication,'' \emph{IEEE Trans.
  Ind. Electron.}, vol.~60, no.~4, pp. 1607--1618, April 2013.

\bibitem{Carpinelli-TSG:2013}
G.~Carpinelli, G.~Celli, S.~Mocci, F.~Mottola, F.~Pilo, and D.~Proto, ``Optimal
  integration of distributed energy storage devices in smart grids,''
  \emph{IEEE Trans. Smart Grid}, vol.~4, no.~2, pp. 985--995, June 2013.

\bibitem{GookKim:2013}
B.-G. Kim, S.~Ren, M.~van~der Schaar, and J.-W. Lee, ``Bidirectional energy
  trading and residential load scheduling with electric vehicles in the smart
  grid,'' \emph{IEEE J. Sel. Areas Commun.}, vol.~31, no.~7, pp. 1219--1234,
  July 2013.

\bibitem{VanRoy:2014}
J.~V. Roy, N.~Leemput, F.~Geth, R.~Salenbien, J.~Buscher, and J.~Driesen,
  ``Apartment building electricity system impact of operational electric
  vehicle charging strategies,'' \emph{IEEE Trans. Sustain. Energy}, vol.~5,
  no.~1, pp. 264--272, Jan 2014.

\bibitem{RongYu:2014}
R.~Yu, J.~Ding, W.~Zhong, Y.~Liu, and S.~Xie, ``{PHEV} charging and discharging
  cooperation in v2g networks: A coalition game approach,'' \emph{IEEE Internet
  Things J.}, vol.~1, no.~6, pp. 578--589, Dec 2014.

\bibitem{JunhaoLin:2014}
J.~Lin, K.-C. Leung, and V.~Li, ``Optimal scheduling with vehicle-to-grid
  regulation service,'' \emph{IEEE Internet Things J.}, vol.~1, no.~6, pp.
  556--569, Dec 2014.

\bibitem{JunTan:2014}
J.~Tan and L.~Wang, ``Integration of plug-in hybrid electric vehicles into
  residential distribution grid based on two-layer intelligent optimization,''
  \emph{IEEE Trans. Smart Grid}, vol.~5, no.~4, pp. 1774--1784, July 2014.

\bibitem{Igualada:2014}
L.~Igualada, C.~Corchero, M.~Cruz-Zambrano, and F.-J. Heredia, ``Optimal energy
  management for a residential microgrid including a vehicle-to-grid system,''
  \emph{IEEE Trans. Smart Grid}, vol.~5, no.~4, pp. 2163--2172, July 2014.

\bibitem{Geth-PESGM:2010}
F.~Geth, J.~Tant, E.~Haesen, J.~Driesen, and R.~Belmans, ``Integration of
  energy storage in distribution grids,'' in \emph{IEEE Power and Energy
  Society General Meeting}, Minneapolis, MN, July 2010, pp. 1--6.

\bibitem{Nukamp-JTSG:2013}
S.~Nykamp, M.~Bosman, A.~Molderink, J.~Hurink, and G.~Smit, ``Value of storage
  in distribution grids -- {C}ompetition or cooperation of stakeholders?''
  \emph{IEEE Trans. Smart Grid}, vol.~4, no.~3, pp. 1361--1370, Sep 2013.

\bibitem{Tushar-TSG:Dec2015}
W.~Tushar, C.~Yuen, D.~B. Smith, and H.~V. Poor, ``Price discrimination for
  energy trading in smart grid: A game theoretic approach,'' \emph{IEEE Trans.
  Smart Grid}, Dec. 2015, (to appear).

\bibitem{WT-Li-Access:Nov2015}
W.-T. Li, C.~Yuen, N.~U. Hassan, W.~Tushar, C.-K. Wen, K.~L. Wood, K.~Hu, and
  X.~Liu, ``Demand response management for residential smart grid: From theory
  to practice,'' \emph{IEEE Access-Special Section on Smart Grids: A Hub of
  Interdisciplinary Research}, vol.~3, Nov. 2015.

\bibitem{AqsaNaeem-Access:Nov2015}
A.~Naeem, A.~Shabbir, N.~U. Hassan, C.~Yuen, A.~Ahmed, and W.~Tushar,
  ``Understanding customer behavior in multi-tier demand response management
  program,'' \emph{IEEE Access-Special Section on Smart Grids: A Hub of
  Interdisciplinary Research}, vol.~3, Nov. 2015.

\bibitem{YLiu-TSG:June2015}
Y.~Liu, C.~Yuen, R.~Yu, Y.~Zhang, and S.~Xie, ``Queuing-based energy
  consumption management for heterogeneous residential demands in smart grid,''
  \emph{IEEE Trans. Smart Grid}, vol. pre-print, June 2015, (doi:
  10.1109/TSG.2015.2432571).

\bibitem{Xiumin:Elsevier-July2015}
X.~Wang, C.~Yuen, X.~Chen, N.~U. Hassan, and Y.~Ouyang, ``Cost-aware demand
  scheduling for delay tolerant applications,'' \emph{Elsevier Journal of
  Networks and Computer Applications}, vol.~53, pp. 173--182, July 2015.

\bibitem{Zhang-ACMN:2015}
Y.~Zhang, S.~He, and J.~Chen, ``Data gathering optimization by dynamic sensing
  and routing in rechargeable sensor network,'' \emph{IEEE/ACM Trans. Netw.},
  vol. pre-print, June 2015, (doi: 10.1109/TNET.2015.2425146).

\bibitem{HengZhang-TCST:2015}
H.~Zhang, P.~Cheng, L.~Shi, and J.~Chen, ``Optimal {DoS} attack scheduling in
  wireless networked control system,'' \emph{IEEE Trans. Control Syst.
  Technol.}, vol. pre-print, Aug. 2015, (doi: 10.1109/TCST.2015.2462741).

\bibitem{ZhenpoWang:2013}
Z.~Wang and S.~Wang, ``Grid power peak shaving and valley filling using
  vehicle-to-grid systems,'' \emph{IEEE Trans. Power Del.}, vol.~28, no.~3, pp.
  1822--1829, July 2013.

\bibitem{Wayes-J-TSG:2012}
W.~Tushar, W.~Saad, H.~V. Poor, and D.~B. Smith, ``Economics of electric
  vehicle charging: {A} game theoretic approach,'' \emph{IEEE Trans. Smart
  Grid}, vol.~3, no.~4, pp. 1767--1778, Dec 2012.

\bibitem{Gkatzikis:2013}
L.~Gkatzikis, I.~Koutsopoulos, and T.~Salonidis, ``The role of aggregators in
  smart grid demand response markets,'' \emph{IEEE J. Sel. Areas Commun.},
  vol.~31, no.~7, pp. 1247--1257, July 2013.

\bibitem{Tushar-TSG:2013}
W.~Tushar, J.~A. Zhang, D.~Smith, H.~V. Poor, and S.~Thi{\'{e}}baux,
  ``Prioritizing consumers in smart grid: {A} game theoretic approach,''
  \emph{IEEE Trans. Smart Grid}, vol.~5, no.~3, pp. 1429--1438, May 2014.

\bibitem{Saad-CSmartgridComm:2011}
W.~Saad, Z.~Han, H.~V. Poor, and T.~Ba{\c{s}}ar, ``A noncooperative game for
  double auction-based energy trading between {PHEVs} and distribution grids,''
  in \emph{Proc. IEEE Int'l Conf. Smart Grid Commun. (SmartGridComm)},
  Brussels, Belgium, Oct. 2011, pp. 267--272.

\bibitem{Bradley-RSER:2009}
T.~H. Bradley and A.~A. Frank, ``Design, demonstrations and sustainability
  impact assessments for plug-in hybrid electric vehicles,'' \emph{Renewable
  and Sustainable Energy Reviews}, vol.~13, no.~1, pp. 115--128, Jan 2009.

\bibitem{Derin:2010}
O.~Derin and A.~Ferrante, ``Scheduling energy consumption with local renewable
  micro-generation and dynamic electricity prices,'' in \emph{Proc. 1st
  Workshop Green Smart Embedded Syst. Technol.: Infrastruct., Methods, Tools},
  Stockholm, Sweden, Apr 2010, pp. 1--6.

\bibitem{Huang-doubleauction:2002}
P.~Huang, A.~Scheller-Wolf, and K.~Sycara, ``Design of a multi-unit double
  auction e-market,'' \emph{Computational Intelligence}, vol.~18, no.~4, pp.
  596--617, Feb 2002.

\bibitem{Oduguwa:2002}
V.~Oduguwa and R.~Roy, ``Bi-level optimisation using genetic algorithm,'' in
  \emph{Proc. IEEE International Conference on Artificial Intelligence
  Systems}, Geelong, Australia, Feb 2002, pp. 322--327.

\bibitem{Samadi-C-Smartgridcomm:2010}
P.~Samadi, A.-H. Mohsenian-Rad, R.~Schober, V.~Wong, and J.~Jatskevich,
  ``Optimal real-time pricing algorithm based on utility maximization for smart
  grid,'' in \emph{Proc. IEEE Int'l Conf. Smart Grid Commun. (SmartGridComm)},
  Gaithersburg, MD, Oct. 2010, pp. 415--420.

\bibitem{Guojon:2012}
X.~Guojun, L.~Yongsheng, H.~Xiaoqin, X.~Xicong, W.~Qianggang, and Z.~Niancheng,
  ``Study on the proportional allocation of electric vehicles with conventional
  and fast charge methods when in distribution network,'' in \emph{Proc. China
  International Conference on Electricity Distribution (CICED)}, Shanghai,
  China, Sept 2012, pp. 1--5.

\bibitem{Goran:2011}
G.~Kraja{\v{c}}i{\'{c}}, N.~Dui{\'{c}}a, A.~Tsikalakis, M.~Zoulias, G.~Caralis,
  E.~Panteri, and M.~da~Gra{\c{c}}a~Carvalho, ``Feed-in tariffs for promotion
  of energy storage technologies,'' \emph{Energy Policy}, vol.~39, no.~3, pp.
  1410--1425, Mar 2011.

\bibitem{battery:2013}
\BIBentryALTinterwordspacing
{Fraunhofer-Gesellschaft}, ``Breakthrough in electricity storage: {N}ew large
  and powerful redox flow battery,'' \emph{Science Daily}, March 2013,
  retrieved August 31, 2014. [Online]. Available:
  \url{www.sciencedaily.com/releases/2013/03/130318105003.htm}
\BIBentrySTDinterwordspacing

\bibitem{storageType:2012}
\BIBentryALTinterwordspacing
H.~Ali and S.~Ali-Oettinger, ``Advancing li-ion,'' May 2012, published in
  \emph{pv magazine}. [Online]. Available:
  \url{http://www.pv-magazine.com/archive/articles/beitrag/advancing-li-ion-100006681/501/axzz3Vkp0jXnS}
\BIBentrySTDinterwordspacing

\bibitem{Tesla:2015}
\BIBentryALTinterwordspacing
N.~Garun, ``{At 38,000 reservations, Tesla's powerwall is already sold out
  until mid-2016},'' 2015, accessed May 7, 2015. [Online]. Available:
  \url{http://thenextweb.com/gadgets/2015/05/07/powerwall-is-sold-out/}
\BIBentrySTDinterwordspacing

\bibitem{LIPA}
\BIBentryALTinterwordspacing
{LIPA}, ``Proposal concerning modifications to lipa's tariff for electric
  service,'' 2010, accessed on April 3, 2015. [Online]. Available:
  \url{http://www.lipower.org/pdfs/company/tariff/proposal\_feedin.pdf}
\BIBentrySTDinterwordspacing

\end{thebibliography}

\end{document}